 \documentclass[onefignum,onetabnum]{siamonline190516}



\usepackage{lipsum}
\usepackage{amsfonts}
\usepackage{graphicx}
\usepackage{epstopdf}
\usepackage{algorithmic}
\ifpdf
  \DeclareGraphicsExtensions{.eps,.pdf,.png,.jpg}
\else
  \DeclareGraphicsExtensions{.eps}
\fi

\usepackage{xcolor}
\usepackage[utf8]{inputenc} 
\usepackage[T1]{fontenc}    
\usepackage{hyperref}       
\usepackage{url}            
\usepackage{booktabs}       
\usepackage{nicefrac}       
\usepackage{microtype}      

\usepackage{caption}
\usepackage{subcaption}

\usepackage{etoolbox}
\makeatletter
\patchcmd{\maketitle}
 {\def\@makefnmark}
 {\def\@makefnmark{}\def\useless@macro}
 {}{}
\makeatother

\newtheorem{assumption}{Assumption}

\DeclareMathOperator{\var}{var}

\newcommand{\argmin}{\operatornamewithlimits{argmin}}

\newcommand{\given}{\,|\,}

\def\md{{\mathrm d}}

\def\cA{{\mathcal A}}

\def\cH{{\mathcal H}}
\def\cJ{{\mathcal J}}

\def\bR{{\mathbb R}}

\def\bE{{\mathbb E}}

\def\bN{{\mathbb N}}

\def\NPDF{{\mathcal N}}

\def\cL{{\mathcal L}}

\def\f0{{\mathbf 0}}

\def\KL{{\operatorname{KL}}}

\usepackage{enumitem}
\setlist[enumerate]{leftmargin=.5in}
\setlist[itemize]{leftmargin=.5in}


\newsiamremark{remark}{Remark}
\newsiamremark{hypothesis}{Hypothesis}
\crefname{hypothesis}{Hypothesis}{Hypotheses}
\newsiamthm{claim}{Claim}

\headers{Statistical Finite Elements via Langevin Dynamics}{Akyildiz, Duffin, Sabanis, and Girolami}

\title{Statistical Finite Elements via Langevin Dynamics\thanks{{\"O.~D.~A. was supported by the Lloyd’s Register Foundation Data Centric Engineering Programme and EPSRC Programme Grant EP/R034710/1. C.~D and M.~G were supported by EPSRC grant EP/T000414/1 and M.~G was supported by a Royal Academy of Engineering Research Chair, and EPSRC grants EP/R018413/2, EP/P020720/2, EP/R034710/1, EP/R004889/1. S.S. was supported by the Alan Turing Institute under the EPSRC grant EP/N510129/1.}}}

\author{\"Omer Deniz Akyildiz$^{\star,} $\thanks{The Alan Turing
    Institute, London, NW1 2DB, UK and University of Cambridge, Cambridge, CB2 1PZ, UK}
  \and Connor Duffin$^{\star,} $\thanks{University of Cambridge, Cambridge, CB2 1PZ, UK}
  \and Sotirios Sabanis\thanks{University of Edinburgh, Edinburgh, EH8 9YL, UK
    and The Alan Turing Institute, London, NW1 2DB, UK}
  \and Mark Girolami\thanks{The Alan Turing Institute, London, NW1 2DB, UK and
    University of Cambridge, Cambridge, CB2 1PZ, UK}}

\usepackage{amssymb}
\usepackage{amsopn}

\newcommand{\nobs}{n_{\mathrm{obs}}}
\newcommand{\ninner}{n_{\mathrm{inner}}}

\usepackage[symbol]{footmisc}


\ifpdf
\hypersetup{
  pdftitle={Statistical Finite Elements via Langevin Dynamics},
  pdfauthor={authors}
}
\fi


\externaldocument{ex_supplement}


\begin{document}

\maketitle
\footnotetext[0]{\textsf{$\vphantom{x}^\star$Corresponding authors: \"O.~D.~A
(\email{odakyildiz@turing.ac.uk}) and C.~D (\email{cpd32@cam.ac.uk}).}}

\begin{abstract}
  The recent statistical finite element method (statFEM) provides a coherent
statistical framework to synthesise finite element models with observed
data. Through embedding uncertainty inside of the governing equations, finite
element solutions are updated to give a posterior distribution which quantifies
all sources of uncertainty associated with the model. However to incorporate all
sources of uncertainty, one must integrate over the uncertainty associated with
the model parameters, the known \textit{forward problem} of uncertainty
quantification. In this paper, we make use of Langevin dynamics to solve the
statFEM forward problem, studying the utility of the unadjusted Langevin
algorithm (ULA), a Metropolis-free Markov chain Monte Carlo sampler, to build a
sample-based characterisation of this otherwise intractable measure. Due to the
structure of the statFEM problem, these methods are able to solve the forward
problem without explicit full PDE solves, requiring only sparse matrix-vector products. ULA is
also gradient-based, and hence provides a scalable approach up to high
degrees-of-freedom. Leveraging the theory behind Langevin-based samplers, we
provide theoretical guarantees on sampler performance, demonstrating
convergence, for both the prior and posterior, in the Kullback-Leibler
divergence, and, in Wasserstein-2, with further results on the effect of
preconditioning. Numerical experiments are also provided, for both the prior and
posterior, to demonstrate the efficacy of the sampler, with a \texttt{Python}
package also included.
\end{abstract}

\begin{keywords}
  Uncertainty quantification, Finite Element Methods, Inverse Problems, Langevin dynamics
\end{keywords}

\begin{AMS}
  65C05, 60H15, 65N30, 35R30
\end{AMS}

\section{Introduction}
\label{sec:intro}

Uncertainty quantification (UQ) is a fundamental aspect of science, with
increasing attention being cast upon so-called physics-informed statistical
models~\cite{berger2019statistical}. Such models recognise the inherent
uncertainty in model specification, which will almost certainly lead to
misspecification due to incomplete knowledge~\cite{kennedy2001bayesian}. A
particular approach, named statFEM~\cite{girolami2021statistical}, does so via a
probabilistic description
of the model variable $u(x)$, $x \in \Omega \subset \bR^s$, which for the
elliptic PDE case is
\begin{equation}
  \begin{aligned}
    \mathcal{L}_\theta u &= f + \xi, \\
    \log \theta(x) \sim \mathcal{GP}(m(x),
    k_\theta(x, x')), &\quad \xi(x) \sim \mathcal{GP}(0, k_\xi(x, x')).
  \end{aligned}
  \label{eq:statFEMIntro}
\end{equation}
As the mathematical description of the system, this defines the
conditional Gaussian process (GP) $p(u \given \theta) =
\mathcal{GP}([\mathcal{L}_\theta^{-1}f] (x), \mathcal{L}_\theta^{-1} k_\xi(x,
x')\mathcal{L}_\theta^{-*})$, an infinite-dimensional object.
In practice, discretisation is required, which for finite elements proceeds from
the weak form of Equation~\eqref{eq:statFEMIntro},
\[
  \cA_\theta(u, v) = \langle f + \xi, v \rangle, \quad \forall v \in V,
\]
where $\cA_\theta(\cdot, \cdot)$ is the bilinear form generated from
$\cL_\theta$, and $\langle \cdot, \cdot \rangle$ is the appropriate Hilbert
space inner product. Defining the $d$-dimensional basis function expansion
for $u(x) \approx \sum_{i = 1}^d u_i \phi_i(x)$,
$v(x) \approx \sum_{i = 1}^d v_i \phi_i(x)$ then gives the induced Gaussian
\[
  p(u \given \theta) = \NPDF(A_\theta^{-1} b, A_\theta^{-1} G A_\theta^{-\top}),
\]
where $A_{\theta, ij} = \cA_\theta(\phi_i, \phi_j)$, $b_j =
\langle f, \phi_j \rangle$, $G_{ij} = \langle \phi_i, \langle k_\xi(\cdot,
\cdot), \phi_j \rangle \rangle$.
The now finite-dimensional probability distribution provides a quantification of
the \textit{a priori} belief in the model specification and serves as a
reference measure for further inference and data assimilation;
observed data, $y \in \bR^{d_y}$, can now be incorporated to give the posterior
over the FEM coefficients $u \in \bR^d$.
Examples~\cite{duffin2021lowrank,duffin2021statistical,febrianto2021selfsensing}
have demonstrated the usefulness of this approach, with the computed posterior
distribution providing a statistically coherent synthesis of physics and data,
with an interpretable UQ.

Yet to provide UQ, uncertainty associated with $\theta$ must be taken into
account in order to compute the statFEM prior and posterior distributions. This
requires marginalising over $\theta$; for the prior this is $p(u) = \int p(u
\given \theta) p(\theta) \md \theta$. This integral is intractable, and some
sort of approximation is required. In previous work, $\theta$ was either fixed
or a first-order expansion was taken, to give an approximate marginal measure.
In this work we take a fully Bayesian approach and marginalise over these
parameters directly.

Due to stochasticity, a Monte-Carlo (MC) approximation can be used. For
elliptic stochastic PDEs, one such approximation is multilevel MC
(MLMC)~\cite{cliffe2011multilevel,giles2015multilevel}, in which the system is
solved on a hierarchy of more computationally expensive models, to estimate a
possibly nonlinear functional of interest. Additional methods that
involve MC sampling for functional estimation are multilevel Markov chain Monte
Carlo (MLMCMC)~\cite{dodwell2015hierarchical} and
quasi-MC~\cite{graham2011quasimonte}. However, in this work we target the
measure defined by the finite element solution, both \textit{a priori} and
\textit{a posteriori}, an inherently high-dimensional probability distribution.
Thus in order to estimate $p(u)$ and $p(u \given y)$, we require samplers
that are robust to mesh-refinement and have been applied to physics-based
problems. This has been studied in Bayesian inversion~\cite{stuart2010inverse}.

For high-dimensional sampling in the inverse problems context,
Metropolis-Hastings (MH) Markov chain Monte Carlo (MCMC) samplers are often
used~\cite{robert2004monte}, one example of which is the preconditioned
Crank-Nicolson (pCN) algorithm~\cite{cotter2013mcmc}. The pCN algorithm is
defined on the function space of the underlying problem, before discretisation,
ensuring robustness under mesh-refinement. The standard proposal mechanism uses
a random walk, with the covariance being the same as the prior distribution.
Extensions of pCN, to alternative proposals that make use of the geometry of the
target measure, are also considered
in~\cite{beskos2017geometric,law2014proposals}. An alternate approach is the
stochastic Newton method~\cite{martin2012stochastic}, which constructs a proposal mechanism from
the Hessian of the log-density.

However, in high-dimensional scenarios, recent works have shown that taking an
MH-free approach can also lead to efficient samplers that scale well with
increasing dimension. Inspired by the success of gradient-based optimisers in
machine learning, methods based on Langevin diffusions, such as the unadjusted
Langevin algorithm (ULA)
\cite{dalalyan2019user,durmus2017nonasymptotic,durmus2019high,vempala2019rapid,wibisono2018sampling}
have received significant attention. Our approach is further
motivated by the standard ULA iteration, targeting
$p(u \given \theta)$, being
\[
  u_k = u_{k - 1} - \eta A_\theta^\top G^{-1} (A_\theta u_{k - 1} - b)
  + \sqrt{2 \eta} Z_k, \quad Z_k \sim \NPDF(0, I),
\]
which does not require the inversion of the FEM stiffness matrix. For
judicious choice of the covariance structure of $G$, this gives an iterative
sampling approach which can characterise the measure $p(u \given \theta)$
through sparse matrix-vector products, solving the forward problem without ever
needing to solve the model itself.

The contribution of this work is to establish a link between popular
gradient-based sampling methods and statFEM, paving the way for their general
use in stochastic elliptic PDEs. These schemes are based on Euler
discretisations of continuous-time Langevin diffusions, and, due to the
structure of statFEM, assumptions required for these algorithms to perform well
are satisfied a priori: they are ideal tools to sample the induced
measures. After defining the problem in Section~\ref{sec:background}, we define
various samplers and provide explicit convergence rates in
both the KL divergence and in Wasserstein-2, for both the prior and posterior
measures, in Section~\ref{sec:Theory}. We also study the effect of
preconditioning on the
allowed step-sizes of the algorithm. In Section~\ref{sec:experiments}, we
empirically demonstrate these results on case studies for
the Poisson equation both \textit{a priori} and \textit{a posteriori}. We also
include a \texttt{Python} package to accompany this paper, containing all
implementations\footnote{Available
at~\url{https://www.github.com/connor-duffin/ula-statfem}}.

This paper enables the use of similar samplers in this framework, such as
variance-reduced, underdamped, decentralised, and parallelised Langevin
algorithms. All of these schemes can be analysed and incorporated by extending
the techniques we provide in this paper and can contribute to the use of the
statFEM methodology under various scenarios where other samplers may be
preferable.

\subsection*{Notation} For a set $A$, we denote its boundary as $\partial A$. We denote the space of functions which are compactly supported and infinitely differentiable with $\mathcal{C}_c^\infty$. We say that $v: \Omega \to \bR$ is a weak derivative of $u: \Omega\to\bR$ in the direction of the $i$th coordinate if
\begin{align*}
    \int_\Omega u(x) \varphi(x) \md x = - \int_\Omega v(x) \frac{\partial \varphi}{\md x_i}(x) \md x,
\end{align*}
for all compactly supported functions $\varphi \in \mathcal{C}_c^\infty$. The Sobolev space $H^1(\Omega)$ is defined as a set of all functions that are square integrable and have weak derivatives in all coordinate directions. We also define the Sobolev space $H^1_0(\Omega)$ to denote the functions $u \in H^1(\Omega)$ and identically zero on $\partial \Omega$. The notation $L^2(\Omega)$ denotes the space of square-integrable functions and $\langle f, g\rangle = \int_\Omega f(x) g(x) \md x$ the associated inner product w.r.t. the Lebesgue measure.

\section{Statistical Finite Elements for Linear PDEs}
\label{sec:background}
Consider the elliptic example of Section~\ref{sec:intro},
Equation~\eqref{eq:statFEMIntro}, with Dirichlet boundary conditions
\begin{equation}
  \label{eq:FormulationPDE}
  \begin{aligned}
    \mathcal{L}_\theta u &= f + \xi, \quad x \in \Omega, \\
    u &= 0, \quad \quad x \in \partial \Omega,
  \end{aligned}
\end{equation}
where $u := u(x)$, $f = f(x)$,  $x \in \Omega \subset \bR^d$. We note that
hyperparameters of the Gaussian processes $\xi$, $\theta$ are assumed to be known.

A Gaussian prior measure can be constructed by considering the
weak solutions of \eqref{eq:FormulationPDE}, conditioned on $\theta$.
The weak form is specified from multiplying both sides of the equation with a
test function $v \in H_0^1(\Omega)$, assuming that $u \in H_0^1(\Omega)$,
$f, \xi \in L^2(\Omega)$, and $\theta \in L^\infty(\Omega)$, and integrating over
$\Omega$,
\begin{align}
    \mathcal{A}_\theta(u, \psi) = \langle f, \psi\rangle + \langle \xi, \psi\rangle.
\end{align}
The space $H_0^1(\Omega)$ admits an the orthonormal basis
$\{\phi_i\}_{i\in \bN}$, so for $u \in H^1_0(\Omega)$,
$u = \sum_{i=1}^\infty u_i \phi_i(x)$ and hence
\begin{align*}
  \sum_{i=1}^\infty u_i \mathcal{A}_\theta(\phi_i, \phi_j) = \langle f, \phi_j\rangle + \langle \xi, \phi_j\rangle,
  \quad j \in \bN,
\end{align*}
as $v \in H_0^1(\Omega)$. Therefore, $\mathcal{A}_\theta$ can be seen as
an infinite dimensional matrix $A_\theta$ with entries $A_{\theta, ij} =
\mathcal{A}_\theta(\phi_i, \phi_j)$.

To construct the finite-dimensional approximation, we first subdivide the domain
$\Omega$ to construct the triangulation $\Omega_h \subset
\Omega$, with vertices $\{x_j\}_{j = 1}^d$, where the maximal length of the
sides of the triangulation is given by $h$. The polynomial basis functions
$\{\phi_j\}_{j = 1}^d$ are then defined on the mesh, having the property that
$\phi_i(x_j) = \delta_{ij}$. In this work we consider the linear polynomial
$C^1(\Omega)$ basis functions only, and denote by $V_h \subset H_0^1(\Omega)$ as
the span of these basis functions. Projecting from the infinite-dimensional
space gives the finite-dimensional approximation $u(x) \approx u_h(x) = \sum_{j
  = 1}^d u^{(j)} \phi_j(x)$. The vector of FEM coefficients is
$u = (u^{(1)}, \ldots, u^{(d)})$, reusing the notation from
\eqref{eq:FormulationPDE}; dimensionality should be clear from context.

The finite-dimensional weak form induces a Gaussian distribution
over the FEM coefficients $u$. Conditioned on $\theta$, this is
\begin{align} \label{eq:ConditionalDistribution}
  p(u|\theta) = \NPDF(A_\theta^{-1}b, A^{-1}_\theta G A_\theta^{-\top}),
\end{align}
where $A_{\theta,ij} = \mathcal{A}_\theta(\phi_i, \phi_j)$,
$b_j = \langle f, \phi_j\rangle$,
and $G_{ij} = \langle \phi_i, \langle k_\xi(\cdot, \cdot), \phi_j \rangle\rangle$. The marginal distribution of the
solution $u$ is
\begin{align}\label{eq:MarginalDistributionOfu}
    p(u) = \int p(u | \theta) p(\theta) \md \theta,
\end{align}
and the first goal in this paper is to describe a methodology for sampling from this marginal. This is done in Section~\ref{sec:SamplingPrior}, by targeting the joint $p(u,\theta) = p(u|\theta)
p(\theta)$ and considering the marginal of $u$. Our second goal is to sample from
the posterior distribution
\begin{align}
    p(u | y) = \frac{p(u, y)}{p(y)}
\end{align}
To solve this problem, we construct samplers in the extended space $p(u, \theta | y)$ and target its $u$-marginal. The inference of the solution amounts to a Bayesian inference procedure and we tackle this problem by developing a sampler for the posterior measure over the numerical solutions of the PDE under consideration, in Section~\ref{sec:SamplingPosterior}.

We note that a classical approach to sample from the joint $p(u, \theta)$ is to
consider a Gibbs sampler which targets $p(u \given \theta)$ and $p(\theta \given
u)$ in turn. However, as $p(\theta \given u)$ is not of the form of any known
distribution (whilst its prior is a GP), we choose to sample from the
marginal through exact sampling from $p(\theta)$, which can be done efficiently
using, for example, Kronecker methods~\cite{saatci2011scalable} or circulant embedding~\cite{dietrich1997fast}. We also note
that this approach is useful in scenarios in which samples from
$\theta$ have already been obtained, from some distribution of
interest, and their effect on the model solutions is desired.

\subsection{Conditional Langevin SDEs}
In this section, we introduce conditional Langevin SDEs to sample from
$p(u|\theta)$ and $p(u | y, \theta)$ for fixed $\theta$. In particular, we consider
\begin{align}\label{eq:statFEM_SDE}
    \md u_t = - \nabla \Phi_\theta(u_t) \md t + \sqrt{2} \md B_t,
\end{align}
where $(B_t)_{t \geq 0}$ is a Brownian motion and
\begin{align*}
  \Phi_\theta(u) = \frac{1}{2}(A_\theta u - b)^\top G^{-1}(A_\theta u - b).
\end{align*}
By construction, this SDE targets the conditional distribution given in Eq.~\eqref{eq:ConditionalDistribution}. The potential advantage of using such a diffusion can already be observed by the expression of the diffusion given in \eqref{eq:statFEM_SDE}. In particular, note that the drift in this diffusion is given by
\begin{align}
    \nabla_u \Phi_\theta(u) = A_\theta^\top G^{-1} A_\theta u - A_\theta^\top G^{-1} b.
\end{align}
This expression does not contain any inverse of the form $A_\theta^{-1}$ which implies an efficient computational scheme. Note that here $G^{-1}$ is a matrix that is easy to invert, e.g., a diagonal matrix.

Before going into an efficient discretisation of the conditional Langevin diffusion, we note that the SDE in \eqref{eq:statFEM_SDE} has an analytical solution as summarised in the following remark.
\begin{remark} The Langevin SDE given in \eqref{eq:statFEM_SDE} has the following analytical solution (see, e.g., \cite{wibisono2018sampling})
\begin{align}
    u_t \overset{d}{=} A_{\theta}^{-1} b + e^{-t A_\theta^\top G^{-1} A_\theta} (u_0 - A_\theta^{-1} b) + (A_\theta^{-1} G^{} A_\theta^{-\top})^{1/2} \left(I - e^{-2t A_\theta^\top G^{-1} A_\theta}\right)^{1/2} Z
\end{align}
where $Z \sim \NPDF(0,I)$ is independent of $u_0$.
\end{remark}
It is clear that having an analytical solution is not useful in this example, since it necessitates computing $A_\theta^{-1}$ and other extra computations, which ends up being impractical compared to performing an inversion in the first place. However, despite the exact solution requiring heavy computations, the associated numerical scheme to the Langevin SDE does not require the same computational resources, and can be formally shown to track the underlying diffusion closely.

To sample from the conditional SDE targeting $p(u|\theta) \propto \exp(-\Phi_\theta(u))$, we consider the unadjusted Langevin algorithm (ULA) which is the standard Euler-Maruyama discretisation of the SDE \eqref{eq:statFEM_SDE} and given as
\begin{align}
u_{k+1} &= u_k - \eta \nabla_u \Phi_\theta(u_k) + \sqrt{2\eta} Z_{k+1},\nonumber \\
&= u_k - \eta A_\theta^\top G^{-1} A_\theta u_k + \eta A_\theta^\top G^{-1} b + \sqrt{2\eta} Z_{k+1},\label{eq:ULAiteration}
\end{align}
where $(Z_k)_{k\geq 0}$ are i.i.d standard Normal random variables, $Z_k \sim \NPDF(0, I)$ for every $k$. Given that this scheme consists of a noisy linear mapping, we can also write down the asymptotic distribution of the ULA exactly. To see this, first rewrite the iterations as
\begin{align*}
u_{k+1} - A_\theta^{-1} b = (I - \eta A_\theta^\top G^{-1} A_\theta)(u_k - A_\theta^{-1} b) + \sqrt{2\eta} Z_{k+1}.
\end{align*}
then limiting distribution of the conditional ULA is \cite{wibisono2018sampling}
\begin{align*}
p_\infty^\eta = \NPDF\left(A_\theta^{-1} b, A_\theta^{-1} G A_\theta^{-\top} \left( I - \frac{\eta}{2} A_\theta^\top G^{-1} A_\theta \right)^{-1} \right).
\end{align*}
It is interesting to observe that, in this case, the mean estimates computed using the stationary measure would be unbiased. The discretisation scheme results in the bias of the uncertainty estimates, which can be made arbitrarily small as $\eta \to 0$.

To sample from the conditional posterior distribution $p(u | y, \theta)$, we define the potential
\begin{align*}
    \Phi^y_\theta(u) &= -\log p(u | y, \theta) \\
    &= -\log p(y|u) - \log p(u | \theta).
\end{align*}
In this case, the Langevin SDE targeting $p(u | y,\theta)$ takes the form
\begin{align*}
    u_{k+1} = u_k - \eta \nabla_u \Phi_\theta^y(u_k) + \sqrt{2\eta} Z_{k+1}.
\end{align*}
Note that this scheme is general and can handle nonlinear observation models
unlike an exact sampling method.

\subsection{Unadjusted Langevin algorithm for sampling the prior measure}\label{sec:SamplingPrior}
The first problem we are interested in solving is to sample from the marginal prior of the solutions $p(u)$ induced by the statistical FEM construction. To solve this problem, we design a sampler that targets the joint distribution $p(u,\theta)$ and we obtain the marginal $p(u)$ from these samples. In order to sample from the joint $p(u,\theta)$, we leverage the conditional schemes and consider the following scheme
\begin{align}
    \theta &\sim p(\theta), \label{eq:SampleTheta}\\
    u_{k+1} &= u_k - \eta \nabla_u \Phi_\theta(u_k) + \sqrt{2\eta} Z_{k+1}.\label{eq:SampleU}
\end{align}
We show in Sec.~\ref{sec:Theory} that this scheme can indeed be used for approximately sampling from the marginal prior $p(u)$ and provide some theoretical guarantees for this sampler.
\subsection{Unadjusted Langevin algorithm for sampling the posterior measure}\label{sec:SamplingPosterior}
Sampling from $p(u)$ is not our final goal, as we are interested in sampling from the posterior measure $p(u|y)$. To this end, we consider the sampler
\begin{align}
    \theta &\sim p(\theta), \label{eq:SampleThetaData}\\
    u_{k+1} &= u_k - \eta \nabla_u \Phi_\theta^y(u_k) + \sqrt{2\eta} Z_{k+1}.\label{eq:SampleUData}
\end{align}
which samples from $p(u,\theta | y)$ (See Section~\ref{sec:Theory}). We note that
\begin{align*}
    \nabla \Phi_\theta^y(u) = - \nabla \log p(u | y, \theta) = -\nabla \log p(y|u) - \nabla \log p(u|\theta).
\end{align*}
Therefore the implementation of the recursions \eqref{eq:SampleThetaData}--\eqref{eq:SampleUData} are straightforward given $\theta$.

\subsection{Preconditioned unadjusted Langevin schemes}
For badly conditioned problems, preconditioning can improve the convergence and stability of the above ULA schemes significantly. Therefore, to sample from the prior measure $p(u)$, we consider the preconditioned sampler
\begin{align}
    \theta &\sim p(\theta) \label{eq:preconditionedMarginalSamplerPrior} \\
    u_{k+1} &= u_k - \eta M \nabla \Phi_\theta(u_k) + \sqrt{2\eta} M^{1/2} W_{k+1},\label{eq:preconditionedMarginalSamplerConditional},
\end{align}
where $(W_k)_{k\geq 0}$ is the sequence of standard Normal random variables. Similarly, in order to sample from the posterior $p(u | y)$, we consider
\begin{align}
    \theta &\sim p(\theta) \label{eq:preconditionedPosteriorSamplerPrior} \\
    u_{k+1} &= u_k - \eta M \nabla \Phi_\theta^y (u_k) + \sqrt{2\eta} M^{1/2} W_{k+1},\label{eq:preconditionedPosteriorSamplerConditional}.
\end{align}
We provide theoretical analysis of the preconditioned schemes in Section~\ref{sec:analysis:preconditioned} and demonstrate their utility in Section~\ref{sec:experiments}.

\subsection{The Algorithm} To draw an approximate sample from the marginal distributions $p(u)$ or $p(u
\given y)$, using ULA, the approach we take in this paper is to first draw
$\theta_k \sim p(\theta)$, and then run a sub-chain for $p(u \given \theta_k)$
(or $p(u \given \theta_k, y)$), for $\ninner$ iterations. This sub-chain is
initialised to the previous iterate $u_{k - 1}$. Due to the warm start, it is
assumed that $\ninner$ is small; in Section~\ref{sec:experiments} we take
$\ninner = O(10)$. Note also that the inner iterations are cheap as there is no
requirement for FEM assembly, requiring only matrix-vector products with the
sparse $A_\theta$. This gives the set of samples $\{u_{i, k}\}_{i = 1}^{\ninner}
\sim p(u \given \theta_k)$, and we take the joint sample as $(u_k, \theta_k) =
(u_{\ninner, k}, \theta_k) \sim p(u, \theta)$ (the full algorithm is shown in
Algorithm~\ref{alg:pula}). The inner iterations ensure that the sample is taken
from the target measure, whilst also decorrelating the $u_k$ samples.

\begin{algorithm}
  \caption{StatFEM ULA sampler.}
  \label{alg:pula}
  \begin{algorithmic}
    \STATE{Let $M \in \bR^{d \times d}$ be the preconditioner, $u_0 \in \bR^d$
      the initial condition.}
    \FOR{$k = 1, \ldots, K$}
      \STATE{$\theta_k \sim p(\theta)$.}
      \STATE{$u_{0, k} = u_{k - 1}$.}
      \FOR{$i = 1, \ldots, \ninner$}
        \STATE{$Z_{i, k} \sim \mathcal{N}(0, I)$.}
        \STATE{$u_{i, k} = u_{i - 1, k}
          - \eta M \nabla \Phi_\theta (u_{i - 1, k})
          + \sqrt{2\eta} M^{1/2} Z_{i, k}$}
      \ENDFOR
      \STATE{$u_k = u_{\ninner, k}$}.
    \ENDFOR
    \RETURN $\{(u_k, \theta_k)\}_{k = 1}^K$
  \end{algorithmic}
\end{algorithm}

\section{Analysis}\label{sec:Theory}
In this section, we analyse the proposed schemes and prove convergence rates. In particular, in Section~\ref{sec:analysis:conditional}, we prove the convergence of the conditional ULA chains, where the $(u_k)_{k\geq 0}$ iterates run with fixed $\theta$. Then, in Section~\ref{sec:analysis:prior}, we start by analysing the sampler defined in Eqs.~\eqref{eq:SampleTheta}--\eqref{eq:SampleU}. This sampler aims at sampling from the marginal $p(u)$. Next, in Section~\ref{sec:analysis:posterior}, we analyse the sampler defined in Eqs.~\eqref{eq:SampleThetaData}--\eqref{eq:SampleUData}. This analysis follows from similar arguments of the analysis of the marginal sampler of $p(u)$. Finally, in Section~\ref{sec:analysis:preconditioned}, we extend these results to the case when the preconditioners are used in the ULA chains. We demonstrate, quantitatively, that a good preconditioner choice can impact and improve the error and convergence rates significantly, an observation we verify in the experimental section.

Throughout this section, we assume the following.
\begin{assumption}\label{assmp:LowerUpperBounds} We assume there exists $m > 0$ and $L > 0$ such that
\begin{align}
0 < m = \inf_{\theta} \lambda_{\min} (A_\theta^\top G^{-1} A_\theta)  \quad \quad \textnormal{and} \quad \quad L = \sup_\theta \lambda_{\max}(A_\theta^\top G^{-1} A_\theta) < \infty.
\end{align}
\end{assumption}
This assumption implies that, for every $\theta$, we have
\begin{align*}
m I \preceq \nabla^2 \Phi_\theta(u) \preceq L I.
\end{align*}
Using this definition, we also define \textit{the worst-case condition number}
\begin{align}\label{eq:KappaMax}
    \kappa_{\max} := \frac{L}{m} < \infty.
\end{align}
The existence and finiteness of these quantities is verifiable under some conditions.

\subsection{The analysis in KL divergence for the conditional ULA}\label{sec:analysis:conditional}
Two standard conditions for analysing the convergence of Langevin-type scheme is strong convexity and gradients of the potential being Lipschitz. In our case, we can prove that these conditions hold as follows.
\begin{lemma}\label{lem:ConvexSmooth} Given Assumption~\ref{assmp:LowerUpperBounds}, the gradient $\nabla \Phi_\theta(u)$ is Lipschitz, i.e.,
\begin{align*}
    \|\nabla \Phi_\theta(u) - \nabla \Phi_\theta(u')\| \leq L_\theta \|u - u'\|,
\end{align*}
where $L_\theta = \lambda_{\max}(A_\theta^\top G^{-1} A_\theta)$. Moreover, for fixed $\theta$, the function $\Phi_\theta(\cdot)$ is strongly convex with $m_\theta = \lambda_{\min}(A_\theta^\top G^{-1} A_\theta)  > 0$ which follows from Assumption~\ref{assmp:LowerUpperBounds}.
\end{lemma}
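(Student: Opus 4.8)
The plan is to exploit the fact that $\Phi_\theta$ is a quadratic form, so that its Hessian is constant and the entire lemma reduces to reading off eigenvalue bounds. First I would compute the gradient directly from the definition $\Phi_\theta(u) = \frac{1}{2}(A_\theta u - b)^\top G^{-1}(A_\theta u - b)$, obtaining $\nabla \Phi_\theta(u) = A_\theta^\top G^{-1} A_\theta u - A_\theta^\top G^{-1} b$, and then differentiate once more to get the constant Hessian $\nabla^2 \Phi_\theta(u) = A_\theta^\top G^{-1} A_\theta =: H_\theta$. Since $G$ is symmetric positive definite, $G^{-1}$ is as well, and $H_\theta$ is symmetric positive semidefinite for every $\theta$, so the spectral decomposition and eigenvalue language apply.

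For the Lipschitz bound, I would write the gradient difference as $\nabla \Phi_\theta(u) - \nabla \Phi_\theta(u') = H_\theta(u - u')$, which is exact because the map $u \mapsto \nabla \Phi_\theta(u)$ is affine. Taking norms and using that for a symmetric matrix the operator norm equals its largest eigenvalue yields $\|\nabla \Phi_\theta(u) - \nabla \Phi_\theta(u')\| = \|H_\theta(u - u')\| \leq \lambda_{\max}(H_\theta)\|u - u'\|$, which is precisely the claim with $L_\theta = \lambda_{\max}(A_\theta^\top G^{-1} A_\theta)$.

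For strong convexity, I would invoke the standard second-order characterisation: a twice continuously differentiable function is $m_\theta$-strongly convex if and only if $\nabla^2 \Phi_\theta(u) \succeq m_\theta I$ for all $u$. Because the Hessian equals the constant matrix $H_\theta$, this reduces to $H_\theta \succeq \lambda_{\min}(H_\theta) I$, which holds by the spectral theorem, and the strict positivity $m_\theta = \lambda_{\min}(A_\theta^\top G^{-1} A_\theta) > 0$ is exactly the lower bound supplied by Assumption~\ref{assmp:LowerUpperBounds}.

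There is no serious obstacle here; the only points requiring care are verifying that $G^{-1}$ is well defined and symmetric positive definite, so that $H_\theta$ is genuinely symmetric and the eigenvalue bounds are meaningful, and that $A_\theta$ has full rank so that $\lambda_{\min}(H_\theta) > 0$ rather than merely $\geq 0$. Both are guaranteed: the former from the construction of $G$ as a Gram-type covariance matrix, and the latter is encoded directly in the strict inequality $m = \inf_\theta \lambda_{\min}(A_\theta^\top G^{-1} A_\theta) > 0$ of the assumption. The $\theta$-uniform constants $m$ and $L$ are not needed for this per-$\theta$ statement, but are what will later permit the bounds to be made uniform across $\theta$.
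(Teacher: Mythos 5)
Your proof is correct and follows essentially the same route as the paper's: the gradient difference is $A_\theta^\top G^{-1}A_\theta(u-u')$, bounded in norm by $\lambda_{\max}$ since the matrix is symmetric, and strong convexity is read off from the constant Hessian via the second-order characterisation. Your added remarks on the symmetry and positive definiteness of $G^{-1}$ and on why the operator norm equals the largest eigenvalue merely make explicit what the paper compresses into ``properties of the matrix norm.''
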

\begin{proof}
The first claim follows from
\begin{align*}
\|\nabla_u \Phi_\theta(u) - \nabla_u \Phi_\theta(u')\| &= \|A_\theta^\top G^{-1} A_\theta u - A_\theta^\top G^{-1} A_\theta u'\|, \\
&\leq L_\theta \|u-u'\|,
\end{align*}
where the last line is obtained by the properties of the matrix norm. For proving the second claim, note that a function is strongly convex iff $\nabla^2 \Phi_\theta(u) \succeq m_\theta I$. In our case, $\nabla^2 \Phi_\theta(u) = A_\theta^\top G^{-1} A_\theta$, hence $m_\theta = \lambda_{\min}(A_\theta^\top G^{-1} A_\theta) > 0$.
\end{proof}
Assumption~\ref{assmp:LowerUpperBounds} states that $L = \sup_\theta L_\theta < \infty$ and $m = \inf_\theta m_\theta >0$.

Next, we provide the convergence rate of the conditional ULA, adapted from \cite{vempala2019rapid}.
\begin{theorem}\label{thm:conditionalConv} Under Assumption~\ref{assmp:LowerUpperBounds} and step-size given by $0 < \eta \leq \frac{m_\theta}{4 L_\theta^2}$, we have
\begin{align*}
\KL(p_k(u|\theta) || p(u|\theta)) \leq e^{-m_\theta \eta k} \KL(p_0(u) || p(u|\theta)) + {8 \eta d L_\theta \kappa_\theta},
\end{align*}
where $\kappa_\theta$ is the condition number of the matrix $A_\theta^\top G^{-1} A_\theta$.
\end{theorem}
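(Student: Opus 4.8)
The plan is to verify that the target $p(u|\theta) \propto e^{-\Phi_\theta(u)}$ satisfies the two hypotheses underpinning the entropy-dissipation argument of Vempala and Wibisono, and then to run that argument with the sharp constants available in our Gaussian setting. The two ingredients are (i) a log-Sobolev inequality (LSI) for the target and (ii) $L_\theta$-smoothness of $\Phi_\theta$. Ingredient (ii) is exactly the Lipschitz-gradient claim of Lemma~\ref{lem:ConvexSmooth}. For (i) I would note that $\Phi_\theta$ is $m_\theta$-strongly convex (also Lemma~\ref{lem:ConvexSmooth}), so by the Bakry--\'Emery criterion the measure $p(u|\theta)$ satisfies an LSI with constant $\alpha = m_\theta$; equivalently, since $p(u|\theta)$ is the Gaussian $\NPDF(A_\theta^{-1}b, A_\theta^{-1} G A_\theta^{-\top})$, its LSI constant is the least eigenvalue of its precision $A_\theta^\top G^{-1} A_\theta$, namely $m_\theta$.

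Next I would set up the continuous-time interpolation of the ULA chain: on each interval $t \in [k\eta, (k+1)\eta]$ the iterate solves $\md u_t = -\nabla\Phi_\theta(u_{k\eta})\,\md t + \sqrt{2}\,\md B_t$ with the drift frozen at the left endpoint, so that $u_{(k+1)\eta}$ carries the law $p_{k+1}(u|\theta)$. Writing $\rho_t$ for the law of $u_t$ and $\nu = p(u|\theta)$, the Fokker--Planck equation for this frozen-drift diffusion yields an exact identity for $\frac{\md}{\md t}\KL(\rho_t \| \nu)$ that decomposes into minus the relative Fisher information $J(\rho_t\|\nu)$ plus an error term driven by the drift mismatch $\nabla\Phi_\theta(u_t) - \nabla\Phi_\theta(u_{k\eta})$.

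The heart of the argument is to control this error. I would bound the mismatch by $L_\theta$-smoothness, $\|\nabla\Phi_\theta(u_t) - \nabla\Phi_\theta(u_{k\eta})\| \le L_\theta\|u_t - u_{k\eta}\|$, estimate the one-step mean-squared displacement $\bE\|u_t - u_{k\eta}\|^2$ in terms of $\eta$, $L_\theta$, $d$ and the frozen gradient, and then apply Young's inequality so that half of the Fisher-information dissipation absorbs the error term, leaving the other half free. Feeding the LSI in the form $J(\rho_t\|\nu) \ge 2m_\theta\,\KL(\rho_t\|\nu)$ into that remaining half gives a differential inequality of the shape $\frac{\md}{\md t}\KL(\rho_t\|\nu) \le -m_\theta\,\KL(\rho_t\|\nu) + (\text{error of order } \eta d L_\theta^2)$; the restriction $\eta \le m_\theta/(4L_\theta^2)$ is precisely what makes the dissipation dominant so that this splitting closes. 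Note that spending half the dissipation on the error is what yields the contraction rate $m_\theta$ rather than the continuous-time rate $2m_\theta$.

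Finally I would integrate this inequality across one step to obtain a per-step bound $\KL(\rho_{(k+1)\eta}\|\nu) \le e^{-m_\theta\eta}\KL(\rho_{k\eta}\|\nu) + O(\eta^2 d L_\theta^2)$, and then sum the resulting geometric series: the contractions compound to $e^{-m_\theta\eta k}$ and the per-step residuals collect, via $\sum_j e^{-m_\theta\eta j} \le (m_\theta\eta)^{-1}$ up to a constant, to a floor of order $\eta d L_\theta^2/m_\theta = 8\eta d L_\theta\kappa_\theta$ after substituting $\kappa_\theta = L_\theta/m_\theta$. The main obstacle is the error-control step: obtaining a displacement bound sharp enough that, under the prescribed step-size, the leftover after the Young splitting is genuinely $O(\eta)$ per unit time rather than $O(1)$, which is exactly what produces a vanishing-bias floor as $\eta\to 0$ instead of a constant one.
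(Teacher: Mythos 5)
Your proposal is correct and follows essentially the same route as the paper: the paper's proof simply verifies $m_\theta$-strong convexity and $L_\theta$-smoothness via Lemma~\ref{lem:ConvexSmooth} and then invokes Theorem~2 of \cite{vempala2019rapid} verbatim, rewriting the bias term $8\eta d L_\theta^2/m_\theta$ as $8\eta d L_\theta \kappa_\theta$. What you have written out --- the Bakry--\'Emery log-Sobolev inequality, the frozen-drift interpolation, the Young splitting of the Fisher-information dissipation, and the geometric summation of the per-step residuals --- is precisely the proof of that cited theorem, so you have correctly reconstructed the black box the paper relies on rather than taken a genuinely different path.
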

\begin{proof}
Using Theorem~2 in \cite{vempala2019rapid}, we directly obtain
\begin{align*}
\KL(p_k(u|\theta), p(u|\theta)) \leq e^{-m_\theta \eta k} \KL(p_0(u), p(u|\theta)) + \frac{8 \eta d L_\theta^2}{m_\theta}.
\end{align*}
Observing that $\kappa_\theta = L_\theta / m_\theta$ is the condition number of the matrix $A_\theta^\top G^{-1} A_\theta$, we obtain the result.
\end{proof}
This result is for fixed $\theta$, but still is insightful to demonstrate the dependence of the error bound to the condition number of the matrix $A_\theta^\top G^{-1} A_\theta$. This bound readily implies that we have to choose small step-sizes for badly conditioned problems. Furthermore, it also motivates preconditioning to improve this condition number, as we will discuss in Sec.~\ref{sec:analysis:preconditioned}.

\subsection{The analysis in KL divergence for the ULA prior sampler}\label{sec:analysis:prior}
In this section, we prove theoretical guarantees for the sampler defined in \eqref{eq:SampleTheta}--\eqref{eq:SampleU}.

Similarly to the conditional case, we can still attain a similar result for the marginal sampler of $u$ as the conditional case as we outline below.
\begin{theorem}\label{thm:MarginalConvergence} Under Assumption~\ref{assmp:LowerUpperBounds} and $0 < \eta \leq \frac{m}{4 L^2}$,
\begin{align*}
    \KL(p_k(u) || p(u)) \leq e^{-m \eta k} \bE\left[ \KL(p_0(u) || p(u|\theta))\right] + 8 \eta L d \kappa_{\max},
\end{align*}
where $\kappa_{\max}$ is given in \eqref{eq:KappaMax}.
\end{theorem}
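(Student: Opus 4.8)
The plan is to reduce the marginal statement to the already-established conditional bound of Theorem~\ref{thm:conditionalConv} by exploiting the product structure of the joint laws together with the data-processing inequality for the KL divergence. Write the joint iterate law as $p_k(u,\theta) = p_k(u\given\theta)\,p(\theta)$ and the joint target as $p(u,\theta) = p(u\given\theta)\,p(\theta)$. Because the scheme \eqref{eq:SampleTheta}--\eqref{eq:SampleU} draws $\theta$ exactly from its prior at every step, and the initialisation $p_0(u)$ does not depend on $\theta$, the two $\theta$-marginals coincide and the common initial conditional law is simply $p_0(u)$.

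First I would invoke the chain rule for the KL divergence. Since the $\theta$-marginals are identical, the $\theta$-contribution vanishes and it collapses to
\begin{align*}
\KL(p_k(u,\theta) \,\|\, p(u,\theta)) = \bE_\theta\!\left[\KL(p_k(u\given\theta) \,\|\, p(u\given\theta))\right].
\end{align*}
Then, since passing from the joint to the $u$-marginal is a deterministic (Markov) map, the data-processing inequality gives $\KL(p_k(u) \,\|\, p(u)) \leq \KL(p_k(u,\theta) \,\|\, p(u,\theta))$. Combining the two yields the key reduction
\begin{align*}
\KL(p_k(u) \,\|\, p(u)) \leq \bE_\theta\!\left[\KL(p_k(u\given\theta) \,\|\, p(u\given\theta))\right].
\end{align*}

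Next I would verify that the marginal step-size restriction $0 < \eta \leq m/(4L^2)$ is uniformly admissible for the conditional bound. Under Assumption~\ref{assmp:LowerUpperBounds} one has $m \leq m_\theta$ and $L_\theta \leq L$, hence $m/(4L^2) \leq m_\theta/(4L_\theta^2)$ for every $\theta$, so Theorem~\ref{thm:conditionalConv} applies inside the expectation with this single $\eta$. Substituting its bound gives
\begin{align*}
\bE_\theta\!\left[\KL(p_k(u\given\theta) \,\|\, p(u\given\theta))\right] \leq \bE_\theta\!\left[e^{-m_\theta \eta k}\KL(p_0(u) \,\|\, p(u\given\theta))\right] + 8\eta d\,\bE_\theta[L_\theta \kappa_\theta].
\end{align*}

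Finally I would pull the uniform constants through the expectation. Using $e^{-m_\theta \eta k} \leq e^{-m \eta k}$ (from $m_\theta \geq m$) bounds the first term by $e^{-m \eta k}\,\bE[\KL(p_0(u) \,\|\, p(u\given\theta))]$, and writing $L_\theta \kappa_\theta = L_\theta^2/m_\theta \leq L^2/m = L\kappa_{\max}$ bounds the second by $8\eta L d\,\kappa_{\max}$, which is exactly the claimed inequality. The only genuinely delicate point is the reduction step: one must be confident that the KL divergence between the $u$-marginals is dominated by the expected conditional KL. This is where I expect the main care to be needed, although it is standard and follows cleanly from the chain-rule-plus-data-processing argument above rather than from any estimate specific to statFEM; the remaining manipulations are routine applications of the bounds $m_\theta \geq m$ and $L_\theta \leq L$.
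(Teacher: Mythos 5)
Your proposal is correct and follows essentially the same route as the paper: both reduce the marginal bound to Theorem~\ref{thm:conditionalConv} by identifying the expected conditional KL with the joint KL (using that $\theta$ is drawn exactly from its prior, so the $\theta$-marginals coincide), then dominate the $u$-marginal KL by the joint KL via monotonicity of KL under marginalisation (the paper's Lemma~\ref{lem:ChainRuleKL}, your data-processing step), and finally absorb the $\theta$-dependent constants using $m_\theta \geq m$ and $L_\theta \leq L$. Your explicit check that the single step-size $\eta \leq m/(4L^2)$ is uniformly admissible across all $\theta$ is a small point the paper leaves implicit, but the argument is otherwise the same.
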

\begin{proof}
This result is a straightforward consequence of Thm.~\ref{thm:conditionalConv}. In particular, taking expectations of the bound provided in Thm.~\ref{thm:conditionalConv} w.r.t. $\theta$, we obtain for the l.h.s.
\begin{align*}
    \int \KL(p_k(u|\theta) || p(u | \theta)) p(\theta) \md \theta &= \int \int \log \frac{p_k(u | \theta)}{p(u|\theta)} p_k(u | \theta) p(\theta) \md u \md \theta, \\
    &= \int \int \log \frac{p_k(u | \theta) p(\theta)}{p(u|\theta) p(\theta)} p_k(u | \theta) p(\theta) \md u \md \theta, \\
    &= \KL(p_k(u,\theta) || p(u,\theta)).
\end{align*}
Then, Thm.~\ref{thm:conditionalConv} implies that
\begin{align*}
    \KL(p_k(u, \theta) || p(u, \theta)) \leq e^{-m \eta k} \bE[\KL(p_0(u) || p(u | \theta))] + 8 \eta L d \kappa_{\max},
\end{align*}
by observing that $L_\theta < L$ and $m < m_\theta$ for every $\theta$ and $\kappa_{\max} = L / m$. Finally the chain rule of KL-divergence (see Lemma~\ref{lem:ChainRuleKL}) implies that
\begin{align*}
    \KL(p_k(u) || p(u)) \leq \KL(p_k(u,\theta) || p(u,\theta)),
\end{align*}
which concludes the proof.
\end{proof}
These guarantees, based on the KL divergence, can be extended to the Wasserstein-2 distance. This is summarised in the following proposition.
\begin{proposition} Let Assumption~\ref{assmp:LowerUpperBounds} hold and $\eta \leq 2 / (m + L)$. Then, we can obtain
\begin{align}
    W_2(p_k(u), p(u)) \leq \sqrt{2} (1 - m\eta)^{k} \bE[W_2(p_0(u), p(u|\theta))] + \frac{7}{3} \kappa (\eta d)^{1/2}.
\end{align}
\end{proposition}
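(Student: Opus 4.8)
The plan is to mirror the structure of the KL proof of Theorem~\ref{thm:MarginalConvergence}: first establish a conditional (fixed-$\theta$) Wasserstein-2 bound for the ULA chain $(u_k)$ targeting $p(u\given\theta)$, and then integrate over $\theta$. For the conditional step, Lemma~\ref{lem:ConvexSmooth} supplies $m_\theta$-strong convexity and $L_\theta$-smoothness of $\Phi_\theta$, so the target $p(u\given\theta)$ is strongly log-concave and log-smooth. I would therefore invoke the standard $W_2$ contraction estimate for ULA on such targets (of Durmus--Moulines / Dalalyan type), which for $\eta$ in the admissible range gives a one-step synchronous coupling of the iterates that contracts at the rate of the deterministic gradient map, plus an accumulated Euler--Maruyama discretisation bias. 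Iterating the geometric recursion and summing yields, for fixed $\theta$,
\begin{align*}
  W_2(p_k(u\given\theta), p(u\given\theta)) \leq \rho_\theta^{\,k}\, W_2(p_0(u), p(u\given\theta)) + \tfrac{c}{1-\rho_\theta}\,(\text{one-step bias}),
\end{align*}
where $\rho_\theta=\max(|1-\eta m_\theta|,|1-\eta L_\theta|)$ is the Lipschitz constant of the map $u\mapsto u-\eta\nabla\Phi_\theta(u)$, and the bias term scales like $\kappa_\theta(\eta d)^{1/2}$ after the geometric sum.

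\textbf{Uniformisation in $\theta$.} The next step is to make the conditional bound uniform in $\theta$ via Assumption~\ref{assmp:LowerUpperBounds}. The key observation — and the reason the step-size hypothesis is stated with the \emph{global} constants $m,L$ rather than $m_\theta,L_\theta$ — is that $\rho_\theta\le 1-m\eta$ for every $\theta$ whenever $\eta\le 2/(m+L)$. This is checked branch by branch using $m\le m_\theta\le L_\theta\le L$: when $\eta m_\theta\le 1$ one has $1-\eta m_\theta\le 1-\eta m$, and when $\eta m_\theta>1$ one has $\eta m_\theta-1\le 1-m\eta$ because $\eta(m_\theta+m)\le \eta(L+m)\le 2$; the same dichotomy applied to $L_\theta$ uses $\eta(L_\theta+m)\le\eta(L+m)\le 2$. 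This produces the uniform geometric rate $(1-m\eta)^k$ appearing in the claim. Likewise $\kappa_\theta=L_\theta/m_\theta\le\kappa_{\max}=\kappa$ from \eqref{eq:KappaMax} bounds the bias constant uniformly, so every $\theta$ obeys $W_2(p_k(u\given\theta),p(u\given\theta))\le (1-m\eta)^k W_2(p_0(u),p(u\given\theta)) + \tfrac{7}{3}\kappa(\eta d)^{1/2}$ up to absorbing universal constants into the $\tfrac73$ factor.

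\textbf{Marginalisation and the main obstacle.} The remaining step passes from the conditional to the marginal $W_2(p_k(u),p(u))$. Unlike the KL divergence, $W_2$ has no chain rule, so the argument used in Theorem~\ref{thm:MarginalConvergence} does not transfer verbatim; this is the crux of the proof. The resolution is a \emph{glued} coupling exploiting that $p_k(u,\theta)=p_k(u\given\theta)p(\theta)$ and $p(u,\theta)=p(u\given\theta)p(\theta)$ share the identical $\theta$-marginal $p(\theta)$: couple the two joints by the identity on $\theta$ and, conditionally on each $\theta$, by the optimal $W_2$-coupling of $p_k(u\given\theta)$ and $p(u\given\theta)$. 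The $u$-marginals of this coupling are exactly $p_k(u)$ and $p(u)$, whence
\begin{align*}
  W_2^2(p_k(u), p(u)) \leq \bE_\theta\!\left[W_2^2(p_k(u\given\theta), p(u\given\theta))\right].
\end{align*}
Inserting the uniform conditional bound, applying the elementary inequality $(a+b)^2\le 2a^2+2b^2$ (which generates the $\sqrt2$ prefactor), then the subadditivity $\sqrt{x+y}\le\sqrt x+\sqrt y$ and Jensen's inequality to replace $\sqrt{\bE_\theta[W_2^2(p_0(u),p(u\given\theta))]}$ by the expectation $\bE_\theta[W_2(p_0(u),p(u\given\theta))]$, assembles the stated inequality. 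I expect the marginalisation coupling to be the main obstacle (establishing the correct substitute for the KL chain rule), with the uniform contraction verification and the bookkeeping of constants in the conditional $W_2$ estimate being the routine but essential supporting steps.
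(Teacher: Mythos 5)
Your proposal is correct and follows essentially the same route as the paper: a conditional Wasserstein-2 bound of Dalalyan type for fixed $\theta$, the glued-coupling inequality $W_2^2(p_k(u),p(u))\le\bE_\theta[W_2^2(p_k(u\given\theta),p(u\given\theta))]$ (which is exactly the paper's Lemma~\ref{lem:conditionalW2}), and then $(a+b)^2\le 2a^2+2b^2$, subadditivity of the square root, and Jensen. The only difference is cosmetic: you spell out the uniformisation of the contraction factor over $\theta$, which the paper absorbs by citing Theorem~1 of the Dalalyan reference directly with the global constants $m$, $L$, $\kappa$.
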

\begin{proof}
Note that, we can write for $\eta \leq 2 / (m+L)$ that
\begin{align*}
    W_2^2(p_k(u | \theta), p(u|\theta)) \leq 2 (1 - m\eta)^{2k} W_2^2(p_0(u), p(u|\theta)) + \frac{49}{9} \kappa^2 \eta d
\end{align*}
as a consequence of Theorem~1 in \cite{dalalyan2019user} and $(a + b)^2 \leq 2 a^2 + 2 b^2$. Now using Lemma~\ref{lem:conditionalW2}, we can obtain
\begin{align*}
    W_2^2(p_k(u), p(u)) \leq 2 (1 - m\eta)^{2k} \bE[W_2^2(p_0(u), p(u|\theta))] + \frac{49}{9} \kappa^2 \eta d.
\end{align*}
Using $\sqrt{x+y} \leq \sqrt{x} + \sqrt{y}$ and the Jensen's inequality, we obtain the claimed result.
\end{proof}
A similar result was provided for mixtures of log-concave distributions in \cite[Theorem~3]{dalalyan2019user}.

\subsection{The analysis in KL divergence for the ULA sampler for the posterior}\label{sec:analysis:posterior}
In this section, we analyse the sampler \eqref{eq:SampleThetaData}--\eqref{eq:SampleUData}. We first note that the distribution $p(y|u)$ denotes observation model. Next, in order to define the sampler, we define the potential
\begin{align}
    \Phi_\theta^y(u) &= -\log p(y | u) - \log p(u|\theta).
\end{align}
Note that, we allow here for general observation models $p(y | u)$ which can be potentially nonlinear. Then we assume the following lower and upper bounds.
\begin{assumption}\label{assmp:PosteriorLowerUpperBounds} We assume there exists $m_y > 0$ and $L_y > 0$ such that
\begin{align}
0 < m_y = \inf_{\theta, u} \lambda_{\min} (\nabla^2 \Phi_\theta^y(u))  \quad \quad \textnormal{and} \quad \quad L_y = \sup_{\theta, u} \lambda_{\max}(\nabla^2 \Phi_\theta^y(u)) < \infty.
\end{align}
In other words,
\begin{align*}
    m_y I \preceq \nabla^2 \Phi^y_\theta(u) \preceq L_y I.
\end{align*}
for every $u \in \bR^d$.
\end{assumption}
Similar to the previous section, we define the worst-case condition number related to the posterior as
\begin{align}\label{eq:KappaMaxy}
    \kappa^y_{\max} = \frac{L_y}{m_y}.
\end{align}
\begin{remark} (Non-log-concave likelihoods) Recall that
\begin{align*}
    \Phi_\theta^y(u) = \Phi_\theta(u) - \log p(y | u).
\end{align*}
Since we know that $\Phi_\theta$ already satisfies Assumption~\ref{assmp:LowerUpperBounds}, this allows for $\log p(y | u)$ to be non-log-concave, as long as $\Phi_\theta^y$ is still strongly convex. In other words, if the problem is well-conditioned, we can handle non-log-concave nonlinear observation models. In particular, if $\Phi_\theta(u)$ is $m$-strongly convex, then we can afford $-\log p(y | u)$ to be $(m - m_y)$-weakly-convex (for $m_y < m$) in the sense of \cite{vial1983strong}\footnote{A function $f(x)$ is $\rho$-weakly-convex if
\begin{align*}
    f(x) + \frac{s}{2}\|x\|^2_2
\end{align*}
is convex for $s \geq \rho$.} so that $\Phi_\theta^y(u)$ is $m_y$-strongly-convex. Note that, this also demonstrates that the lack of strong log-concavity of $p(y|u)$ results in a slower convergence rate and a worse condition number.
\end{remark}
\begin{remark} (Non-log-concave likelihoods II) We further note that all results we have derived hold under a more general assumption on target distributions, namely the Log-Sobolev inequality (LSI) \cite{vempala2019rapid} (which includes a family of non-log-concave likelihoods). This implies that, given that $\Phi_\theta(u)$ is strongly-convex, this allows $-\log p(y | u)$ to be non-convex (even weaker than weak-convexity), as long as the posterior $p(u|y)$ satisfies the LSI. This may account for a large family of likelihoods in practice.
\end{remark}

In the linear case, following \cite{girolami2021statistical}, we define the observation model as
\begin{align}
    y = H u + e,
\end{align}
where $H u$ is the projected finite element solution and $e \sim \NPDF(0,R)$. We assume that $e$ is a $d_y$-dimensional zero-mean Gaussian random variable with covariance $R = \sigma_e^2 I$. Therefore the conditional $p(y|u)$ can be written explicitly as
\begin{align}
    p(y|u) = \NPDF(H u, R).
\end{align}
In this case, we obtain
\begin{align}
    \Phi_\theta^y(u) &= -\log \NPDF(y; H u,  R) - \log \NPDF(u; A_\theta^{-1} b, A_\theta^{-\top} G A_\theta^{-1}).
\end{align}
We then have
\begin{align*}
    \nabla_u \Phi^y_\theta(u) &= - H^\top R^{-1} (y - H u) + A_\theta^\top G^{-1} A_\theta u - A_\theta^\top G^{-1} b, \\
    &= (A_\theta^\top G^{-1} A_\theta + H^\top R^{-1} H) u - H^\top R^{-1} y - A_\theta^\top G^{-1} b.
\end{align*}
Note that
\begin{align*}
    \nabla^2 \Phi^y_\theta(u) = A_\theta^\top G^{-1} A_\theta + H^\top R^{-1} H.
\end{align*}
The meaning of Assumption~\ref{assmp:PosteriorLowerUpperBounds} in this case is that the contribution from the matrix $H^\top R^{-1} H$ should not make the condition number worse. In this case, it can be verified that this term has positive eigenvalues, therefore Assumption~\ref{assmp:PosteriorLowerUpperBounds} holds in practical settings. This assumption in particular implies that $\Phi_\theta^y$ is $L_y$-smooth and $m_y$-strongly convex.

Next, we state our result.
\begin{theorem} Let Assumption~\ref{assmp:PosteriorLowerUpperBounds} hold and $0 < \eta \leq \frac{m_y}{L_y^2}$. Then, we have
\begin{align*}
\KL(p_k(u|\theta, y) || p(u|\theta, y)) \leq e^{-m_y \eta k} \KL(p_0(u) || p(u|y, \theta)) + {8 \eta d L_y \kappa^y_{\max}}.
\end{align*}
\end{theorem}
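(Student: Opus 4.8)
The plan is to reduce this statement to a direct application of the unadjusted Langevin convergence result of Vempala and Wibisono, mirroring the proof of Theorem~\ref{thm:conditionalConv} but now with the posterior potential $\Phi_\theta^y$ in place of the prior potential $\Phi_\theta$. The target of the inner chain is $p(u \given \theta, y) \propto \exp(-\Phi_\theta^y(u))$, and the whole argument amounts to checking that $\Phi_\theta^y$ meets the hypotheses of that cited bound and then matching the error term to the stated constants.

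First I would verify the two regularity conditions consumed by the result: $L_y$-smoothness of the gradient and $m_y$-strong convexity. Both follow immediately from Assumption~\ref{assmp:PosteriorLowerUpperBounds}. The upper bound $\nabla^2 \Phi_\theta^y(u) \preceq L_y I$, valid for all $u$, gives via the integral (mean-value) form of the gradient difference that $\nabla \Phi_\theta^y$ is $L_y$-Lipschitz; the lower bound $\nabla^2 \Phi_\theta^y(u) \succeq m_y I$ is precisely $m_y$-strong convexity. In the linear-Gaussian observation case the Hessian $\nabla^2 \Phi_\theta^y = A_\theta^\top G^{-1} A_\theta + H^\top R^{-1} H$ is constant in $u$, so both bounds are transparent; for general, possibly nonlinear, observation models the bounds are imposed directly by the assumption, which is exactly why the nonlinear likelihood is admissible without extra work.

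Second, with these properties in hand I would invoke Theorem~2 of \cite{vempala2019rapid} with strong-convexity (equivalently LSI) constant $m_y$ and smoothness constant $L_y$. Since $m_y$-strong convexity implies the log-Sobolev inequality with constant $m_y$ (Bakry--Émery), the hypotheses hold for the target $p(u \given \theta, y)$, and for the stated step-size restriction the bound gives
\[
\KL(p_k(u|\theta, y) || p(u|\theta, y)) \leq e^{-m_y \eta k}\,\KL(p_0(u) || p(u|y, \theta)) + \frac{8 \eta d L_y^2}{m_y}.
\]
Finally I would rewrite the additive term using the worst-case posterior condition number from \eqref{eq:KappaMaxy}: since $\kappa^y_{\max} = L_y/m_y$, we have $8 \eta d L_y^2/m_y = 8 \eta d L_y \kappa^y_{\max}$, which is the claimed estimate.

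I do not expect a substantive obstacle, since the argument is structurally identical to Theorem~\ref{thm:conditionalConv}; the only points demanding care are confirming that Assumption~\ref{assmp:PosteriorLowerUpperBounds} is exactly the smoothness-plus-LSI pair consumed by the Vempala--Wibisono bound (so the nonlinear likelihood needs no additional hypothesis), and checking that the stated step-size condition $0 < \eta \leq m_y/L_y^2$ is compatible with the step-size range required by the cited theorem, adjusting constants if the two differ.
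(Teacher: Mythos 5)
Your proposal is correct and takes essentially the same route as the paper: the paper's proof simply observes that $p(\cdot \mid \theta, y)$ is $m_y$-strongly log-concave and $L_y$-log-smooth under Assumption~\ref{assmp:PosteriorLowerUpperBounds} and then repeats the argument of Theorem~\ref{thm:conditionalConv}, i.e.\ applies Theorem~2 of \cite{vempala2019rapid} and rewrites $8\eta d L_y^2/m_y$ as $8\eta d L_y \kappa^y_{\max}$. Your closing caveat about the step size is a genuine catch: the stated threshold $\eta \le m_y/L_y^2$ differs by a factor of $4$ from the $\eta \le m_\theta/(4L_\theta^2)$ used in Theorem~\ref{thm:conditionalConv} and in the cited bound, so the statement should presumably read $\eta \le m_y/(4L_y^2)$, as indeed Theorem~\ref{thm:MarginalPosteriorConv} does.
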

\begin{proof}
The proof is identical to the proof of Theorem~\ref{thm:conditionalConv} since $p(\cdot | \theta, y)$ is $m^y$-log-strongly concave and $L^y$ log-smooth. 
\end{proof}
A similar argument as in the previous section lets us to obtain our final result about the posterior measure $p(u | y)$.

\begin{theorem}\label{thm:MarginalPosteriorConv} Under Assumption~\ref{assmp:PosteriorLowerUpperBounds} and $0 < \eta \leq \frac{m_y}{4 L_y^2}$,
\begin{align*}
    \KL(p_k(u|y) || p(u|y)) \leq e^{-m_y \eta k} \bE\left[ \KL(p_0(u) || p(u|\theta, y))\right] + 8 \eta L_y d \kappa^y_{\max},
\end{align*}
where $\kappa^y_{\max}$ is given in \eqref{eq:KappaMaxy}.
\end{theorem}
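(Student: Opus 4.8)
The plan is to mirror the proof of Theorem~\ref{thm:MarginalConvergence} verbatim, with the prior constants $(m,L,\kappa_{\max})$ replaced throughout by their posterior counterparts $(m_y,L_y,\kappa^y_{\max})$ and the conditional prior bound replaced by the conditional posterior bound (the posterior analog of Theorem~\ref{thm:conditionalConv} proved just above). First I would start from the fixed-$\theta$ estimate
\[
\KL(p_k(u|\theta, y) || p(u|\theta, y)) \leq e^{-m_y \eta k}\, \KL(p_0(u) || p(u|y, \theta)) + 8 \eta d L_y \kappa^y_{\max},
\]
valid for each $\theta$ under $0<\eta\leq m_y/(4L_y^2)$. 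Crucially, by Assumption~\ref{assmp:PosteriorLowerUpperBounds} the constants $m_y$, $L_y$, $\kappa^y_{\max}$ are already uniform in $\theta$, so the additive error term is constant in $\theta$ and survives averaging unchanged; none of the $L_\theta\leq L$, $m\leq m_\theta$ bookkeeping needed in the prior case is required here.

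Next I would integrate this conditional bound against the law $p(\theta)$ from which the sampler draws $\theta$. The left-hand side is handled by the same identity used in Theorem~\ref{thm:MarginalConvergence}: since both the sampler's joint and the target joint are taken to be $p(\theta)\,p(u|\theta,y)$, the factor $p(\theta)$ cancels inside the logarithm and
\[
\int \KL(p_k(u|\theta, y) || p(u|\theta, y))\, p(\theta)\, \md\theta = \KL(p_k(u,\theta|y) || p(u,\theta|y)).
\]
Averaging the right-hand side gives $e^{-m_y\eta k}\,\bE[\KL(p_0(u) || p(u|\theta, y))] + 8\eta d L_y \kappa^y_{\max}$, where the expectation is over $\theta\sim p(\theta)$, exactly matching the claimed bound.

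Finally I would invoke the chain rule for the KL divergence (Lemma~\ref{lem:ChainRuleKL}) to pass from the joint to the $u$-marginal,
\[
\KL(p_k(u|y) || p(u|y)) \leq \KL(p_k(u,\theta|y) || p(u,\theta|y)),
\]
and combine this with the previous display to conclude.

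I expect the only genuinely delicate point to be the bookkeeping in the KL identity for the averaged left-hand side, which hinges on the sampler's $\theta$-marginal and the target's $\theta$-marginal being the \emph{same} measure $p(\theta)$. One must therefore be explicit that the posterior target in this setting is the mixture $p(u|y)=\int p(u|\theta,y)\,p(\theta)\,\md\theta$, with $\theta$ carrying its prior (rather than data-updated) law, so that the cancellation inside the logarithm is valid; otherwise the clean reduction to the conditional theorem breaks down. Everything else is a direct transcription of the prior-case argument.
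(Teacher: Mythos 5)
Your proposal is correct and follows essentially the same route as the paper, which simply states that the result follows the steps of Theorem~\ref{thm:MarginalConvergence} with prior notation replaced by posterior notation; you have merely spelled out those steps (conditional posterior bound, averaging over $\theta \sim p(\theta)$, the KL identity for the joint, and the chain rule of Lemma~\ref{lem:ChainRuleKL}) in full. Your closing remark that the reduction hinges on $\theta$ carrying the same law $p(\theta)$ in both the sampler's joint and the target joint is a valid and worthwhile clarification of a point the paper leaves implicit.
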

\begin{proof}
The proof is straightforward and follows the same steps as Theorem~\ref{thm:MarginalConvergence} by replacing the notation for the prior with the posterior.
\end{proof}
\subsection{The analysis of the preconditioned Langevin schemes}\label{sec:analysis:preconditioned}
To improve convergence, we also consider the preconditioned ULA. We start with the conditional case, targeting $p(u|\theta)$, whose Langevin diffusion has the form
\begin{align}\label{eq:PreconditionedLangevin}
\md u_t = - M \nabla \Phi_\theta(u_t) \md t + \sqrt{2} M^{1/2} \md B_t.
\end{align}
The strategy used in this part of the analysis is to rewrite this diffusion as a standard overdamped diffusion with a variable transformation (see, e.g., \cite{alrachid2018some}). This allows us to utilise the results for the standard conditional diffusion and obtaining the same rates by using the fact that the KL divergence is invariant w.r.t. invertible transformations. Given the convergence results of the conditional case, it is straightforward to extend our results for the marginal sampler defined in \eqref{eq:preconditionedMarginalSamplerPrior}--\eqref{eq:preconditionedMarginalSamplerConditional} and the posterior sampler defined in \eqref{eq:preconditionedPosteriorSamplerPrior}--\eqref{eq:preconditionedPosteriorSamplerConditional}, as we demonstrate below.

To show that the diffusion defined in \eqref{eq:PreconditionedLangevin} would still target the right stationary measure, we define the following variable \cite{alrachid2018some}
\begin{align*}
    z = M^{-1/2} u.
\end{align*}
By applying Ito's formula, we obtain
\begin{align}
    \md Z_t &= -\nabla_z {\Phi}^M_\theta(Z_t) \md t + \sqrt{2} \md B_t, \label{eq:TransformedDiffusion}
\end{align}
where ${\Phi}^M_\theta(\cdot) = \Phi_\theta \circ M^{1/2}(\cdot)$. It is easy to see that, given $\Phi_\theta^M(z) = \Phi_\theta(M^{1/2} z)$, we can write
\begin{align*}
\nabla_z \Phi_\theta^M(z(u)) = D(u(z)) \nabla_u \Phi_\theta (u)
\end{align*}
where $D(u(z)) = M^{\top/2}$ is the Jacobian of the transformation. 


It can be also seen that the stationary measure of \eqref{eq:TransformedDiffusion}
\begin{align*}
    p_\infty(z) \propto \exp(-\Phi_\theta^M(z)),
\end{align*}
and the standard transformation of the random variables implies that this leaves the original stationary measure invariant, i.e., $p_\infty(u) \propto \exp(-\Phi_\theta(u))$ since $M$ is independent of $u$, the determinant term cancels.

It is easy to see that the convergence rate of the diffusion \eqref{eq:TransformedDiffusion} would be determined by the properties of the new map $\Phi_\theta^M$ for fixed $\theta$. Therefore, we first remark the properties of this map. In order to first analyse the sampler \eqref{eq:preconditionedMarginalSamplerPrior}--\eqref{eq:preconditionedMarginalSamplerConditional}, we first start with a lemma which follows from Assumption~\ref{assmp:LowerUpperBounds}.
\begin{lemma}\label{lem:PredLowerUpperBounds} There exists $m_M > 0$ and $L_M > 0$ such that
\begin{align}
0 < m_M = \inf_{\theta} \lambda_{\min} (M^{\top/2} A_\theta^\top G^{-1} A_\theta M^{1/2} )
\end{align}
and
\begin{align}
L_M = \sup_\theta \lambda_{\max}(M^{\top/2}  A_\theta^\top G^{-1} A_\theta M^{1/2}) < \infty.
\end{align}
\end{lemma}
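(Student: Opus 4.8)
The plan is to transfer the uniform spectral bounds of Assumption~\ref{assmp:LowerUpperBounds} through the congruence by $M^{1/2}$, using nothing more than the variational (Rayleigh quotient) characterisation of the extreme eigenvalues. Throughout I would treat $M$ as a fixed symmetric positive definite matrix, so that $M^{1/2}$ is symmetric and invertible with $\lambda_{\min}(M) > 0$ and $\lambda_{\max}(M) < \infty$; this positive-definiteness of the preconditioner is exactly what guarantees $m_M > 0$, and I would flag it explicitly.

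Writing $B_\theta := A_\theta^\top G^{-1} A_\theta$, Assumption~\ref{assmp:LowerUpperBounds} reads $m I \preceq B_\theta \preceq L I$ for every $\theta$. Fixing $\theta$ and an arbitrary $v \in \bR^d$, and setting $w = M^{1/2} v$, the first step is to observe
\begin{align*}
v^\top M^{\top/2} B_\theta M^{1/2} v = w^\top B_\theta w,
\end{align*}
and then sandwich the right-hand side using Assumption~\ref{assmp:LowerUpperBounds}, namely $m \|w\|^2 \leq w^\top B_\theta w \leq L \|w\|^2$. The second step is to control $\|w\|^2 = v^\top M v$ by the extreme eigenvalues of $M$, i.e. $\lambda_{\min}(M)\|v\|^2 \leq v^\top M v \leq \lambda_{\max}(M)\|v\|^2$.

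Combining the two sandwiches gives, uniformly in $\theta$ and $v$,
\begin{align*}
m\, \lambda_{\min}(M)\, \|v\|^2 \leq v^\top M^{\top/2} B_\theta M^{1/2} v \leq L\, \lambda_{\max}(M)\, \|v\|^2,
\end{align*}
so that $\lambda_{\min}(M^{\top/2} B_\theta M^{1/2}) \geq m \lambda_{\min}(M)$ and $\lambda_{\max}(M^{\top/2} B_\theta M^{1/2}) \leq L \lambda_{\max}(M)$. Since these bounds do not depend on $\theta$, taking the infimum and supremum over $\theta$ yields $m_M \geq m \lambda_{\min}(M) > 0$ and $L_M \leq L \lambda_{\max}(M) < \infty$, which is the claim.

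I do not expect any genuine obstacle here: no structural property of $A_\theta$, $G$, or the $\theta$-dependence is needed beyond the uniform two-sided bound already supplied by Assumption~\ref{assmp:LowerUpperBounds}, and the inf/sup over $\theta$ pass through the congruence without difficulty. The one point requiring care is the strict positivity $m_M > 0$, which hinges on $M$ being invertible rather than merely positive semidefinite; were $M$ singular, the congruence would collapse directions and $\lambda_{\min}(M^{\top/2} B_\theta M^{1/2})$ could vanish.
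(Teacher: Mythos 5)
Your proposal is correct and follows essentially the same route as the paper: both reduce the preconditioned quadratic form to that of $A_\theta^\top G^{-1} A_\theta$ via the congruence $w = M^{1/2}v$ (the paper writes $y = M^{\top/2}x$) and invoke the positive definiteness guaranteed by Assumption~\ref{assmp:LowerUpperBounds}. Your version is in fact slightly more complete, since the explicit bounds $m_M \geq m\,\lambda_{\min}(M) > 0$ and $L_M \leq L\,\lambda_{\max}(M) < \infty$ make the uniformity of the infimum and supremum over $\theta$ transparent, whereas the paper's proof only records pointwise positivity for each $\theta$ and does not address $L_M < \infty$ explicitly.
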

The following remark then establishes the strong convexity and Lipschitz smoothness of $\Phi_\theta^M$.
\begin{remark}
We note that
\begin{align*}
    \|\nabla_z {\Phi}^M_\theta(z) - \nabla_z \Phi^M_\theta(z')\| \leq L_{\theta,M} \|z - z'\|,
\end{align*}
where $L_{\theta,M} = \lambda_{\max}(M^{\top/2} A_\theta^\top G^{-1} A_\theta M^{1/2})$. Also note that $\Phi^M_\theta$ is strongly convex with $m_{\theta,M}= \lambda_{\min}(M^{\top/2} A_\theta^\top G^{-1} A_\theta M^{1/2}) > 0$.
\end{remark}
\begin{proof}
  Assuming a valid preconditioning matrix (i.e. one that admits a square root $M =
  M^{1/2}M^{\top/2}$), then for any $x$ we have
  \begin{align*}
    x^\top M^{1/2} A_\theta^\top G^{-1} A_\theta M^{\top/2} x
    &= \left( M^{\top/2} x \right)^\top A_\theta^\top G^{-1} A_\theta M^{\top/2} x \\
    &= y^\top A_\theta^\top G^{-1} A_\theta y > 0
  \end{align*}
  where $y = M^{\top/2} x$. Positivity is guaranteed as $A_\theta^\top G^{-1}
  A_\theta$ is positive definite. Thus $m_M > 0$ as
  $\lambda_{\min}(A_\theta^\top G^{-1} A_\theta) > 0$ for any $\theta$.
\end{proof}
We define
\begin{align*}
    m_M = \inf_\theta m_{\theta, M} < \infty \quad\quad \textnormal{and} \quad \quad L_M = \sup_{\theta} L_{\theta, M} < \infty.
\end{align*}
We also define the worst-case condition number as in \eqref{eq:KappaMax} for the preconditioned case as
\begin{align}\label{eq:KappaMaxM}
    \kappa_{\max}^M = \frac{L_M}{m_M},
\end{align}
and note that $\kappa_{\max}^M < \infty$ as a corollary of Lemma~\ref{lem:PredLowerUpperBounds}.
Then, we can obtain the following bound. 
\begin{theorem}\label{thm:preconditionedConv} Assume $0 < \eta \leq \frac{m_M}{4 L_M^2}$. Then,
\begin{align*}
    \KL(p_k(u) || p(u)) \leq e^{-m_M \eta k} \bE\left[\KL(p_0(u), p(u|\theta))\right] + 8 \eta d  L_M \kappa^M_{\max},
\end{align*}
where $L_M$ and $\kappa_{\max}^M$ are defined in Lemma~\ref{lem:PredLowerUpperBounds} and Eq.~\eqref{eq:KappaMaxM} respectively.
\end{theorem}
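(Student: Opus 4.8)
The plan is to reduce the preconditioned scheme to the unpreconditioned conditional analysis already carried out in Theorem~\ref{thm:conditionalConv}, exploiting the variable transformation $z = M^{-1/2} u$ introduced in \eqref{eq:TransformedDiffusion}. First I would verify that the discrete preconditioned iteration \eqref{eq:preconditionedMarginalSamplerConditional} maps, under $z_k = M^{-1/2} u_k$, exactly onto the standard Euler--Maruyama discretisation of the transformed diffusion \eqref{eq:TransformedDiffusion} with potential $\Phi_\theta^M(z) = \Phi_\theta(M^{1/2} z)$. Since $M$ is symmetric, $\nabla_z \Phi_\theta^M(z) = M^{1/2} \nabla_u \Phi_\theta(M^{1/2} z)$, and a direct substitution into \eqref{eq:preconditionedMarginalSamplerConditional} shows $z_{k+1} = z_k - \eta \nabla_z \Phi_\theta^M(z_k) + \sqrt{2\eta}\, W_{k+1}$, which is precisely the conditional ULA iteration for the potential $\Phi_\theta^M$. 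Thus the law $p_k(z \mid \theta)$ of the transformed chain is the law of a standard ULA run on $\Phi_\theta^M$.

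Next I would invoke the smoothness and strong convexity of $\Phi_\theta^M$ recorded in the remark following Lemma~\ref{lem:PredLowerUpperBounds}: for each fixed $\theta$, the potential $\Phi_\theta^M$ is $m_{\theta,M}$-strongly convex and $L_{\theta,M}$-smooth. Applying Theorem~\ref{thm:conditionalConv} verbatim to the $z$-chain yields, for $\eta \leq m_{\theta,M}/(4 L_{\theta,M}^2)$,
\[
\KL(p_k(z \mid \theta) \| p(z \mid \theta)) \leq e^{-m_{\theta,M}\eta k}\, \KL(p_0(z) \| p(z \mid \theta)) + 8\eta d\, L_{\theta,M}\kappa_{\theta,M},
\]
with $\kappa_{\theta,M} = L_{\theta,M}/m_{\theta,M}$. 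Because the KL divergence is invariant under the invertible linear map $u \mapsto M^{-1/2} u$, the left-hand side equals $\KL(p_k(u\mid\theta)\|p(u\mid\theta))$, and the same identification holds for the initial term on the right. The step-size hypothesis $\eta \leq m_M/(4 L_M^2)$ guarantees $\eta \leq m_{\theta,M}/(4L_{\theta,M}^2)$ for every $\theta$, since $m_{\theta,M} \geq m_M$ and $L_{\theta,M} \leq L_M$ by Lemma~\ref{lem:PredLowerUpperBounds}.

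Finally I would marginalise over $\theta$ exactly as in Theorem~\ref{thm:MarginalConvergence}. Taking expectations with respect to $p(\theta)$ converts the conditional bound into a bound on $\KL(p_k(u,\theta)\|p(u,\theta))$, after which the chain rule of KL divergence (Lemma~\ref{lem:ChainRuleKL}) gives $\KL(p_k(u)\|p(u)) \leq \KL(p_k(u,\theta)\|p(u,\theta))$. Using the uniform bounds $e^{-m_{\theta,M}\eta k} \leq e^{-m_M\eta k}$ and $L_{\theta,M}\kappa_{\theta,M}\leq L_M\kappa_{\max}^M$ then produces the claimed inequality. I expect the only delicate point to be the bookkeeping of the transformation: one must check that the change-of-variables determinant factors cancel so that the KL divergence is genuinely preserved (as noted already for the stationary measures in the discussion preceding this theorem), and that the discretised preconditioned chain matches the transformed standard chain step for step, rather than merely in the continuous-time limit. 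Once that correspondence is nailed down, the remainder is a routine reuse of the conditional and marginal arguments.
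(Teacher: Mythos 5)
Your proposal is correct and follows essentially the same route as the paper: transform via $z = M^{-1/2}u$, apply the conditional KL bound of Theorem~\ref{thm:conditionalConv} to the potential $\Phi_\theta^M$, use invariance of KL under the invertible linear map, and then marginalise over $\theta$ with the chain rule of Lemma~\ref{lem:ChainRuleKL}. Your explicit check that the discrete preconditioned iteration coincides step-for-step with the standard ULA on the transformed potential, and that the step-size condition holds uniformly in $\theta$, is a welcome piece of bookkeeping that the paper leaves implicit.
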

\begin{proof}
Considering Eq.~\eqref{eq:TransformedDiffusion}, we consider the ULA for this diffusion as in the previous sections. This diffusion will converge to the measure $p(z|\theta)$, hence we obtain
\begin{align*}
    \KL(p_k(z | \theta) || p(z | \theta)) \leq e^{-m_{\theta,M} \eta k} \KL(p_0(z) || p(z|\theta)) + 8 \eta d \frac{L_{\theta, M}^2}{m_{\theta,M}}.
\end{align*}
Since the KL divergence is invariant under invertible transformations, we have $$\KL(p_k(M^{-1/2} u | \theta) || p(M^{-1/2}u | \theta)) = \KL(p_k(u | \theta) || p(u | \theta)),$$
which implies that we can obtain the following bound
\begin{align*}
    \KL(p_k(u | \theta) || p(u | \theta)) \leq e^{-m_{\theta,M} \eta k} \KL(p_0(u) || p(u|\theta)) + 8 \eta d \frac{L_{\theta, M}^2}{m_{\theta,M}}.
\end{align*}
Then, it is straightforward to arrive at the result by taking expectations and using the chain rule of KL-divergence.
\end{proof}
\begin{remark} Theorem~\ref{thm:preconditionedConv} shows that if one chooses a matrix $M$ such that
\begin{align*}
    \kappa_{\max}^M \ll \kappa_{\max},
\end{align*}
then a significant improvement can be made in terms of error guarantees. Also note that the step-size condition in Theorem~\ref{thm:preconditionedConv} (and similarly in other theorems) can be written as
\begin{align*}
    0 < \eta \leq \frac{1}{4 \kappa_{\max}^M L},
\end{align*}
replacing the condition $0 < \eta \leq \frac{1}{4 \kappa_{\max} L}$ appears in Theorem~\ref{thm:MarginalConvergence}. This means that, if $\kappa_{\max}^M \ll \kappa_{\max}$, one can choose much larger step-sizes under the preconditioned case. This is what we precisely observe in the experimental section below.
\end{remark}

Theorem~\ref{thm:preconditionedConv} straightforwardly extends to the posterior case, i.e., for the sampler defined in \eqref{eq:preconditionedPosteriorSamplerPrior}--\eqref{eq:preconditionedPosteriorSamplerConditional}. Similar to the case of the prior, we note the following lemma.
\begin{lemma}\label{lem:PrePostLowerUpperBounds} There exists $m^y_M > 0$ and $L^y_M > 0$ such that
\begin{align}
0 < m^y_M = \inf_{\theta} \lambda_{\min} (M^{\top/2} \nabla^2 \Phi_\theta^y M^{1/2} )
\end{align}
and
\begin{align}
L_M^y = \sup_\theta \lambda_{\max}(M^{\top/2}  \nabla^2 \Phi_\theta^y M^{1/2}) < \infty.
\end{align}
\end{lemma}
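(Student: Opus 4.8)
The plan is to mirror the congruence argument already used for the prior preconditioned bounds in Lemma~\ref{lem:PredLowerUpperBounds}, simply replacing $A_\theta^\top G^{-1} A_\theta$ by the full posterior Hessian $\nabla^2 \Phi_\theta^y$ and invoking Assumption~\ref{assmp:PosteriorLowerUpperBounds} in place of Assumption~\ref{assmp:LowerUpperBounds}. The first thing I would note is that $M^{\top/2} \nabla^2 \Phi_\theta^y M^{1/2}$ is symmetric, since $\nabla^2 \Phi_\theta^y$ is symmetric and $(M^{1/2})^\top = M^{\top/2}$; hence its spectrum is real and the quantities $\lambda_{\min}$ and $\lambda_{\max}$ in the statement are well defined.

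Next I would pass to a quadratic-form characterisation. For any $z \in \bR^d$, setting $w = M^{1/2} z$ gives
\begin{align*}
z^\top M^{\top/2} \nabla^2 \Phi_\theta^y M^{1/2} z = w^\top \nabla^2 \Phi_\theta^y w,
\end{align*}
and Assumption~\ref{assmp:PosteriorLowerUpperBounds} supplies the two-sided bound $m_y \|w\|^2 \leq w^\top \nabla^2 \Phi_\theta^y w \leq L_y \|w\|^2$, uniformly in $\theta$ (and in $u$, which in the linear case does not enter the Hessian). It then remains to relate $\|w\|^2 = \|M^{1/2} z\|^2 = z^\top M^{\top/2} M^{1/2} z$ back to $\|z\|^2$. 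Using that $M$ is a valid preconditioner (symmetric positive definite with $M = M^{1/2} M^{\top/2}$), I would deduce $\lambda_{\min}(M)\|z\|^2 \leq \|M^{1/2}z\|^2 \leq \lambda_{\max}(M)\|z\|^2$, and chaining the inequalities yields the $\theta$-independent bound
\begin{align*}
m_y \lambda_{\min}(M) \|z\|^2 \leq z^\top M^{\top/2} \nabla^2 \Phi_\theta^y M^{1/2} z \leq L_y \lambda_{\max}(M) \|z\|^2.
\end{align*}
By the Rayleigh quotient characterisation this gives, for every $\theta$, the bounds $\lambda_{\min}(M^{\top/2}\nabla^2\Phi_\theta^y M^{1/2}) \geq m_y\lambda_{\min}(M) > 0$ and $\lambda_{\max}(M^{\top/2}\nabla^2\Phi_\theta^y M^{1/2}) \leq L_y\lambda_{\max}(M) < \infty$, so taking the infimum and supremum over $\theta$ produces $m_M^y \geq m_y\lambda_{\min}(M) > 0$ and $L_M^y \leq L_y\lambda_{\max}(M) < \infty$, as claimed.

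The one step I expect to require genuine care is the spectral identity $\lambda_{\min/\max}(M^{\top/2} M^{1/2}) = \lambda_{\min/\max}(M)$. Because the paper allows a possibly non-symmetric square root (only the factorisation $M = M^{1/2} M^{\top/2}$ is assumed), one cannot simply identify $M^{\top/2} M^{1/2}$ with $M$; instead I would justify it via the similarity $M^{\top/2}M^{1/2} = (M^{1/2})^{-1} (M^{1/2} M^{\top/2}) M^{1/2}$, so that $M^{\top/2}M^{1/2}$ and $M$ share the same (positive) spectrum. Everything else is the routine congruence bookkeeping of Lemma~\ref{lem:PredLowerUpperBounds}, now applied to the strongly convex, smooth posterior potential guaranteed by Assumption~\ref{assmp:PosteriorLowerUpperBounds}.
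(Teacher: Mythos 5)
Your argument is correct and follows essentially the same route as the paper, which simply notes that the congruence argument of Lemma~\ref{lem:PredLowerUpperBounds} carries over because $\nabla^2\Phi_\theta^y$ is positive definite (uniformly in $\theta$ and $u$) by Assumption~\ref{assmp:PosteriorLowerUpperBounds}. Your version is more explicit than the paper's one-line proof --- in particular the quantitative bounds $m_M^y \geq m_y\lambda_{\min}(M)$ and $L_M^y \leq L_y\lambda_{\max}(M)$, and the similarity argument handling a possibly non-symmetric square root --- but the underlying idea is identical.
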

\begin{proof}
  The proof is similar to the proof of Lemma~\ref{lem:PredLowerUpperBounds} since the Hessian $\nabla^2\Phi_\theta^y$ is positive definite by Assumption~\ref{assmp:PosteriorLowerUpperBounds}.
\end{proof}
We define the worst-case condition number for the posterior case
\begin{align}\label{eq:KappaMaxMy}
    \kappa_{\max}^{M,y} = \frac{L_M^y}{m^y_M}.
\end{align}
The convergence result then follows in the identical way as we have derived for the prior. In other words, we consider the diffusion
\begin{align*}
    \md u_t = -M \nabla \Phi_\theta^y(u_t) \md t + \sqrt{2} M^{1/2} \md B_t,
\end{align*}
and proceed in the same way as in the prior, which we omit due to space concerns. We summarise the final result of our paper as follows.
\begin{theorem}\label{thm:preconditionedPostConv} Assume that Assumption~\ref{assmp:PosteriorLowerUpperBounds} holds and $0 < \eta \leq \frac{m_M^y}{4 {L^y_M}^2}$. Then,
\begin{align*}
    \KL(p_k(u|y) || p(u|y)) \leq e^{-m^y_M \eta k} \bE\left[\KL(p_0(u), p(u|\theta, y))\right] + 8 \eta d  L^y_M \kappa^{M,y}_{\max},
\end{align*}
where $L^y_M$ and $\kappa_{\max}^M$ are defined in Lemma~\ref{lem:PrePostLowerUpperBounds} and Eq.~\eqref{eq:KappaMaxMy} respectively.
\end{theorem}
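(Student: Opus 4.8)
The plan is to combine the change-of-variables argument from the proof of Theorem~\ref{thm:preconditionedConv} with the marginalisation-over-$\theta$ argument of Theorem~\ref{thm:MarginalPosteriorConv}; the result is essentially a mechanical recombination of these two, which is why the authors omit the details. First, for fixed $\theta$, I would recast the preconditioned posterior diffusion $\md u_t = -M \nabla \Phi_\theta^y(u_t)\md t + \sqrt{2} M^{1/2}\md B_t$ into standard overdamped form through $z = M^{-1/2} u$. Applying It\^o's formula exactly as in Section~\ref{sec:analysis:preconditioned} yields $\md Z_t = -\nabla_z \Phi_\theta^{M,y}(Z_t)\md t + \sqrt{2}\md B_t$ with $\Phi_\theta^{M,y} = \Phi_\theta^y \circ M^{1/2}$. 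Since $M$ is independent of $u$, the Jacobian determinant is constant, so the stationary law $p(z|\theta,y) \propto \exp(-\Phi_\theta^{M,y}(z))$ pushes forward under $z \mapsto M^{1/2} z$ to the correct target $p(u|\theta,y)$.

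Next I would record the smoothness and strong-convexity constants of the transformed potential. By Lemma~\ref{lem:PrePostLowerUpperBounds}, its Hessian $\nabla_z^2 \Phi_\theta^{M,y}(z) = M^{\top/2} \nabla^2\Phi_\theta^y M^{1/2}$ satisfies $m^y_M I \preceq \nabla_z^2 \Phi_\theta^{M,y}(z) \preceq L^y_M I$ uniformly over $\theta$ and $u$, the preconditioned analogue of Assumption~\ref{assmp:PosteriorLowerUpperBounds}. Applying the conditional KL bound (the posterior version of Theorem~\ref{thm:conditionalConv}, itself Theorem~2 of \cite{vempala2019rapid}) to the transformed diffusion gives, for $0 < \eta \le m_{\theta,M}^y/(4 {L_{\theta,M}^y}^2)$,
\begin{align*}
\KL(p_k(z|\theta,y) || p(z|\theta,y)) \le e^{-m_{\theta,M}^y \eta k}\KL(p_0(z) || p(z|\theta,y)) + 8\eta d \frac{{L_{\theta,M}^y}^2}{m_{\theta,M}^y}.
\end{align*}

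Because the KL divergence is invariant under the invertible map $u \mapsto M^{-1/2}u$, the left-hand side equals $\KL(p_k(u|\theta,y) || p(u|\theta,y))$, and the same substitution applies to the initial term. I would then take expectations over $\theta \sim p(\theta)$ and apply the chain rule of KL divergence (Lemma~\ref{lem:ChainRuleKL}), exactly as in the proof of Theorem~\ref{thm:MarginalConvergence}, to obtain $\KL(p_k(u|y) || p(u|y)) \le \bE[\KL(p_k(u|\theta,y) || p(u|\theta,y))]$. Finally, replacing the $\theta$-dependent constants by their uniform bounds $L_{\theta,M}^y \le L_M^y$ and $m_M^y \le m_{\theta,M}^y$ — which also ensures the stated step-size range $0 < \eta \le m_M^y/(4{L_M^y}^2)$ is contained in the per-$\theta$ admissible range — and writing $\kappa_{\max}^{M,y} = L_M^y/m_M^y$ via \eqref{eq:KappaMaxMy} yields the claimed bound.

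The only genuinely delicate point is the uniform-in-$\theta$ two-sided Hessian bound for the transformed posterior potential, namely that conjugating by $M^{1/2}$ and adding the data term $H^\top R^{-1} H$ preserves the strong convexity that drives the contraction. This is precisely what Lemma~\ref{lem:PrePostLowerUpperBounds} supplies; everything downstream is bookkeeping, combining the prior-preconditioned and posterior-marginal arguments already established.
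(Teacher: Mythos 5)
Your proposal is correct and follows exactly the route the paper intends: the authors explicitly state that the result is obtained by applying the change-of-variables argument of Theorem~\ref{thm:preconditionedConv} to the posterior diffusion and then marginalising over $\theta$ as in Theorem~\ref{thm:MarginalPosteriorConv}, omitting the details due to space. Your additional observation that the uniform step-size condition $0 < \eta \leq m_M^y/(4{L_M^y}^2)$ is contained in each per-$\theta$ admissible range is a useful piece of bookkeeping the paper leaves implicit.
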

\begin{proof}
The proof is identical to the proof of Theorem~\ref{thm:conditionalConv}, adapting the arguments for the prior to the posterior.
\end{proof}

\begin{figure}[t]
  \centering
  \includegraphics[width=0.6\textwidth]{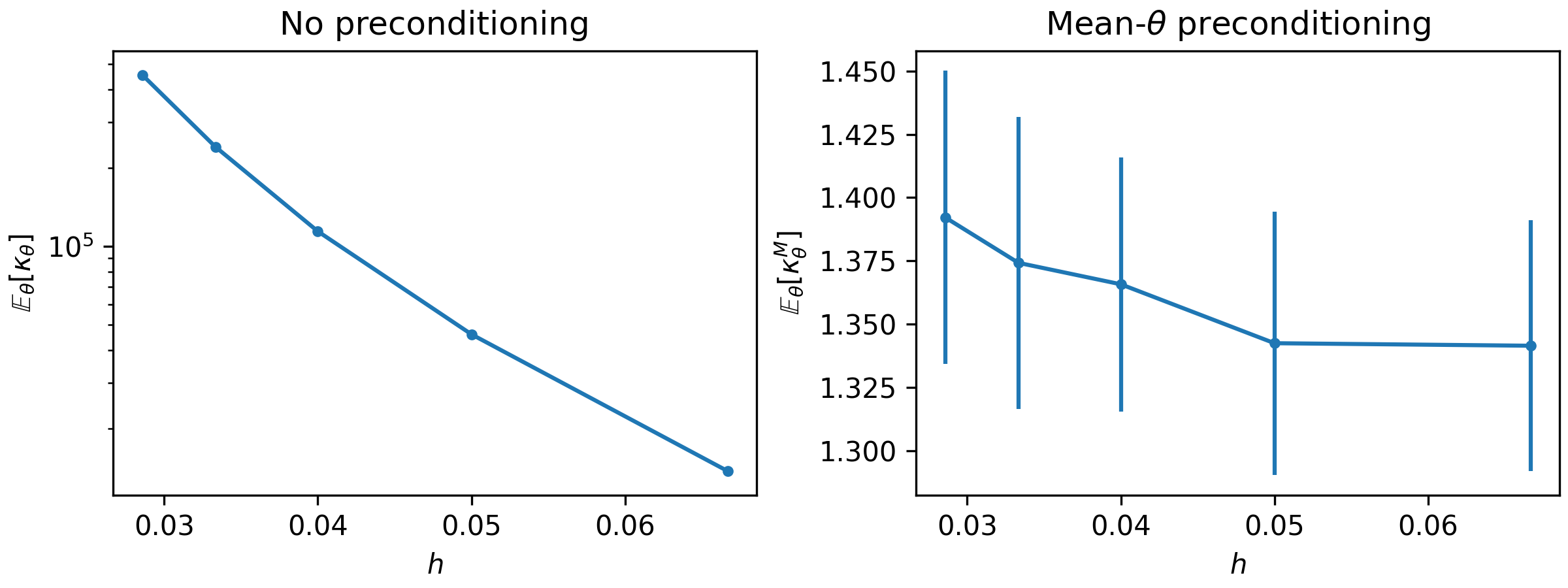}
  \caption{Expected condition number $\bE_{\theta}[\kappa_\theta]$ of the precision
    $A_\theta^{\top} G^{-1} A_\theta$ (left), and expected condition number
    $\bE_{\theta}[\kappa_\theta^M]$ of the preconditioned precision $M_{\bar{\theta}}^{1/2}
    A_\theta^{\top} G^{-1} A_\theta M_{\bar{\theta}}^{1/2}$ (right), for eight
    levels of mesh-refinement. Shown also are the estimated $50\%$ probability
    intervals.}
  \label{fig:prior-cov-conditioning}
\end{figure}

\begin{figure}[t]
  \centering
  \begin{subfigure}[t]{0.3\textwidth}
    \centering
    \includegraphics[width=\textwidth]{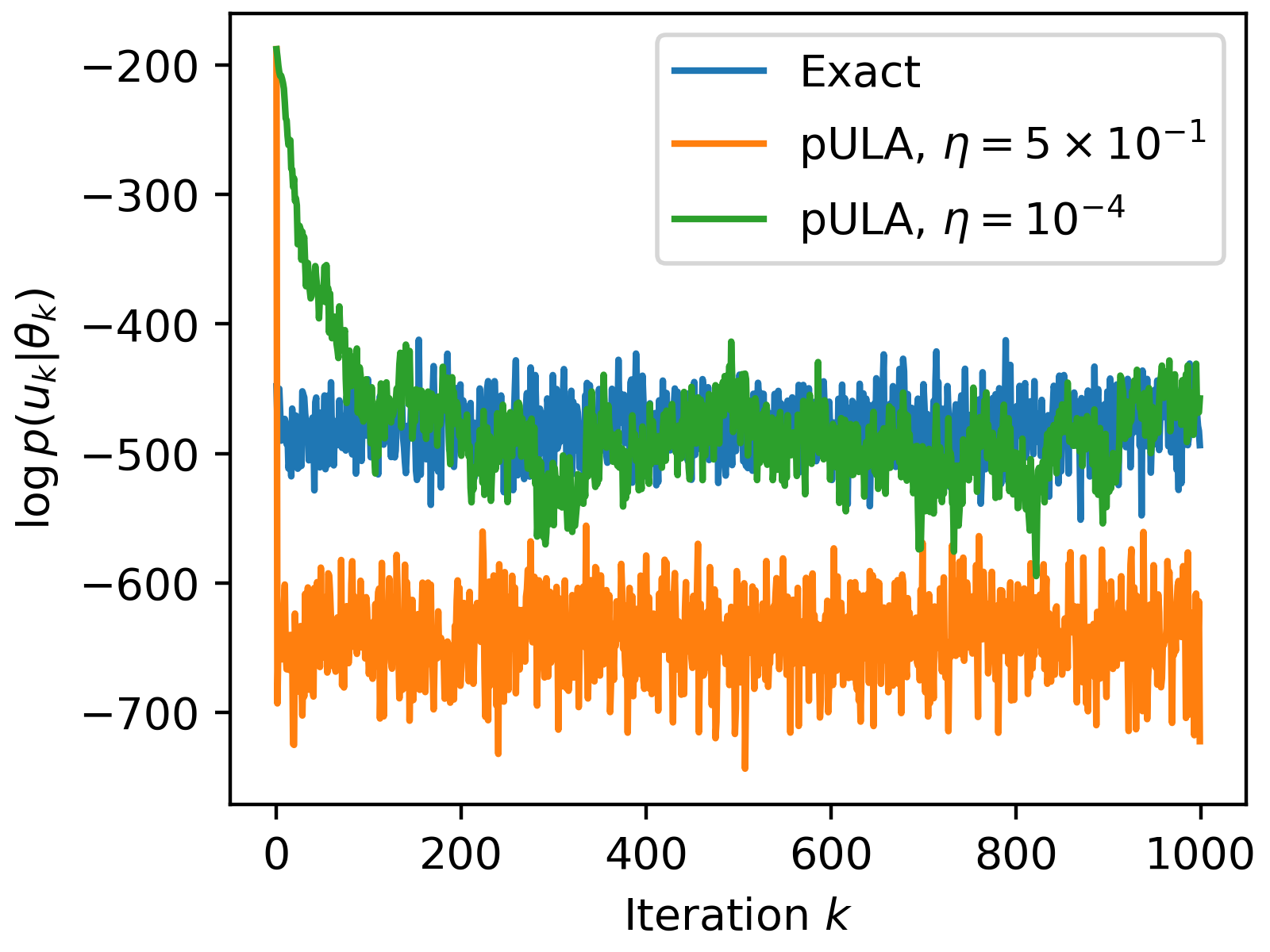}
    \caption{Trace plot.}
    \label{fig:small-prior-trace}
  \end{subfigure}
  \begin{subfigure}[t]{0.3\linewidth}
    \centering
    \includegraphics[width=\textwidth]{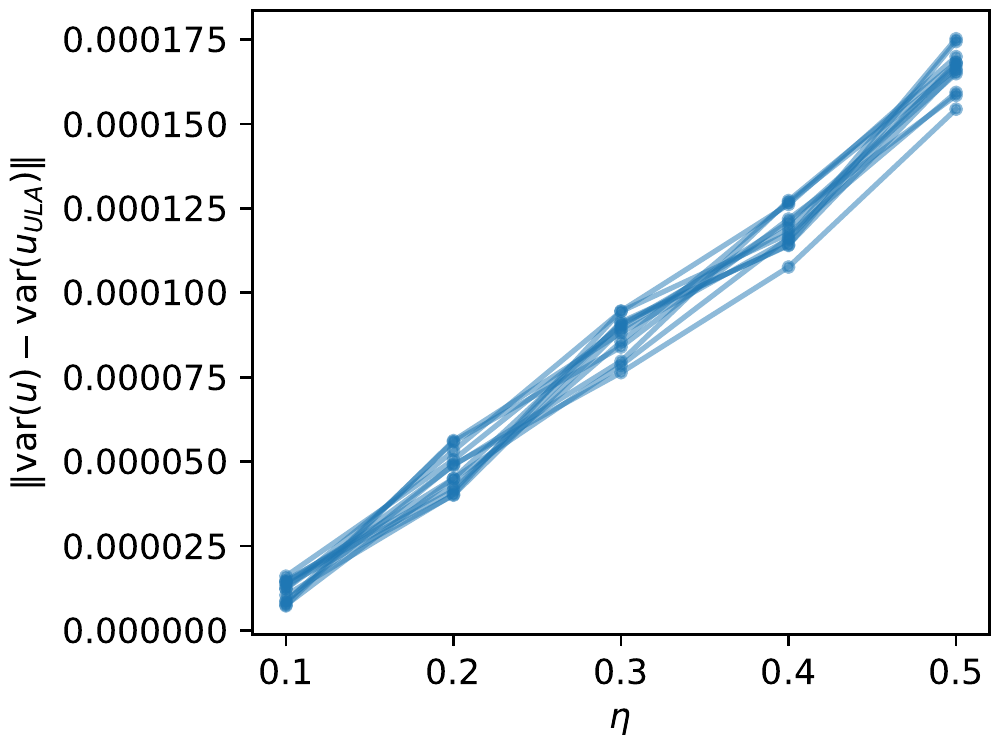}
    \caption{Variance error.}
    \label{fig:small-prior-var-error}
  \end{subfigure}
  \caption{Prior: results for state dimension $d = 1089$, plotting the sampled
    values of the log-target (a), and the errors on the variance, for $10$
    chains (b). For ULA, the stepsize offers a tradeoff between bias and rate of
    convergence: samplers that converge rapidly have a higher bias, which
    appears to increase linearly in $\eta$.}
  \label{fig:prior-trace-bias}
\end{figure}

\section{Empirical results}\label{sec:experiments}

In this section we illustrate the ULA methodology, both with and without
preconditioning, to sample from the prior and posterior measures. These results
illustrate that ULA performs comparably, in terms of mean and variance
estimates, to the standard Metropolis-adjusted algorithms. Preconditioning is
seen to improve convergence, and allows for larger stepsizes. The effect
of simulation parameters is also studied, quantifying the tradeoff between ULA bias, and
rapidity of convergence. All implementations are contained in an open-source
\texttt{Python} package, available
from~\url{https://www.github.com/connor-duffin/ula-statfem}. We use
\texttt{Fenics}~\cite{logg2012automated} for all finite element discretisations.

\subsection{Sampling the prior}

We consider the Poisson problem
\begin{align*}
  -\nabla \cdot \left( \theta(x) \nabla u(x) \right) &= f(x) + \xi(x), \quad x \in \Omega \\
  u &= 0, \quad x \in \partial \Omega
\end{align*}
where $\Omega = [0, 1] \times [0, 1]$ and $f(x) \equiv 1$. Both $\theta$ and $\xi$
are GPs, taking
\begin{gather*}
  \xi(x) \sim \mathcal{GP}(0, 0.05^2 \delta(x - x')), \\
  \log \theta(x) \sim \mathcal{GP}\left(\log\left(1 + 0.3 \sin(\pi(x_0 + x_1))\right),
    k_{\mathrm{se}}(x, x')\right), \;
  k_{\mathrm{se}} = 0.1^2 \exp\left(-\frac{\lVert x - x' \rVert^2}{2\cdot 0.2^2}\right).
\end{gather*}
Discretisation gives the conditional Gaussian $p(u | \theta) =
\mathcal{N}(A_\theta^{-1} b, A_\theta^{-1} G A_\theta^{-\top})$, which can be
marginalised over $\theta$ to give the prior
$p(u) = \int p(u|\theta) p(\theta) \, \md \theta$. We construct
$G$ by noting that
\[
  \begin{aligned}
    \tilde{G}_{ij} &= \beta^2 \int_\Omega \phi_i(x) \int \delta(x - x') \phi_j(x') \, \md x' \md x \\
    &= \beta^2 \int_\Omega \phi_i(x) \phi_j(x) \, \md x' \md x = \beta^2 M_{ij},
  \end{aligned}
\]
for the mass matrix $M$. A lumped approximation is made, to give the
diagonal $G_{ii} = \sum_j \tilde{G}_{ij}$.

We compare the available Langevin methods to sample from the marginal $p(u)$,
both with and without preconditioning. These are the Metropolis-adjusted Langevin
algorithm (MALA), preconditioned MALA (pMALA), ULA, and preconditioned ULA
(pULA). For preconditioning, the pMALA sampler uses the exact Hessian
$M = A_{\theta_k}^{-1} G A_{\theta_k}^{-\top}$, and the pULA sampler uses the
Hessian matrix with $\theta$ set to $\theta(x) = \bar{\theta}$, so $M =
A_{\bar{\theta}}^{-1} G A_{\bar{\theta}}^{-\top}$ (from here on in, this is
referred to as the mean-$\theta$-Hessian).
Figure~\ref{fig:prior-cov-conditioning} shows the effect of this preconditioner
on the prior covariance matrix, under mesh-refinement, using standard MC
sampling. We see that $\bE_\theta[\kappa_\theta^M]$ is approximately
six orders of magnitude smaller than $\bE_\theta[\kappa_\theta]$, being near
unity at various mesh-refinement levels. As the following examples will
show, this preconditioner is very effective; in it's absence the ULA sampler
may diverge due to instability, and MALA mixes poorly.

We first compare the results of running two pULA samplers on a low-dimensional
problem ($d = 1089$, triangular mesh with $32 \times 32$ cells), targeting
$p(u)$. Each sampler has a ``cold-start'', setting $u_0 = 0$, and stepsizes are
$\eta = 10^{-4}$ and $\eta = 5 \times 10^{-1}$. These results are shown in
Figure~\ref{fig:prior-trace-bias}, alongside samples from the exact measure.
With a higher stepsize the bias inherent in the ULA method can be seen, which is
traded for a faster converging sampler (Figure~\ref{fig:small-prior-trace}).
Variance errors are shown in Figure~\ref{fig:small-prior-var-error},
from running $5$ pULA samplers for $10$ chains each, with stepsizes increased
from $\eta = 0.1$ up to $\eta = 0.5$. The error in the variance increases
aproximately linearly as we increase the stepsize; for faster convergence, we
trade off more bias in the sampler.

Next, we increase the dimensionality, targeting $p(u)$ defined from the
stochastic Poisson problem with dimension $d = 16,641$ (FEM mesh with $128
\times 128$ cells). For each sampler we run a single chain with $5000$ warmup
samples, and $K = 10,000$ post-warmup samples. For the pULA sampler, each sample
uses $\ninner = 10$ inner iterations. All samplers are initialised with an exact
sample $u_0 \sim p(u)$; the warmup ensures that adequate
stepsizes are chosen. The MALA samplers were each run with an
adaptive stepsize during the warmup phase of the run, which increased or
decreased $\eta$ depending on the acceptance ratio, to ensure acceptance rates
of $\approx 50\%$. This gives the highly restrictive stepsize $\eta \approx 2.5
\times 10^{-10}$ for MALA, and $\eta \approx 6 \times 10^{-2}$ for pMALA. For
pULA, we use the dimensional stepsize of $d^{-1/3}$~\cite{roberts2001optimal} which is $\eta
\approx 3.9 \times 10^{-2}$. For a variety of stepsizes the ULA scheme diverges
due to numerical instability, typically failing in $<10$
iterations.\footnote{This is unsurprising given the poor condition number of the
covariance matrix, when considered with Theorem~\ref{thm:MarginalConvergence}.}
For this reason we do not show results for ULA, in this subsection.

\begin{figure}[t]
  \centering
  \begin{subfigure}[t]{0.3\textwidth}
    \includegraphics[width=\textwidth]{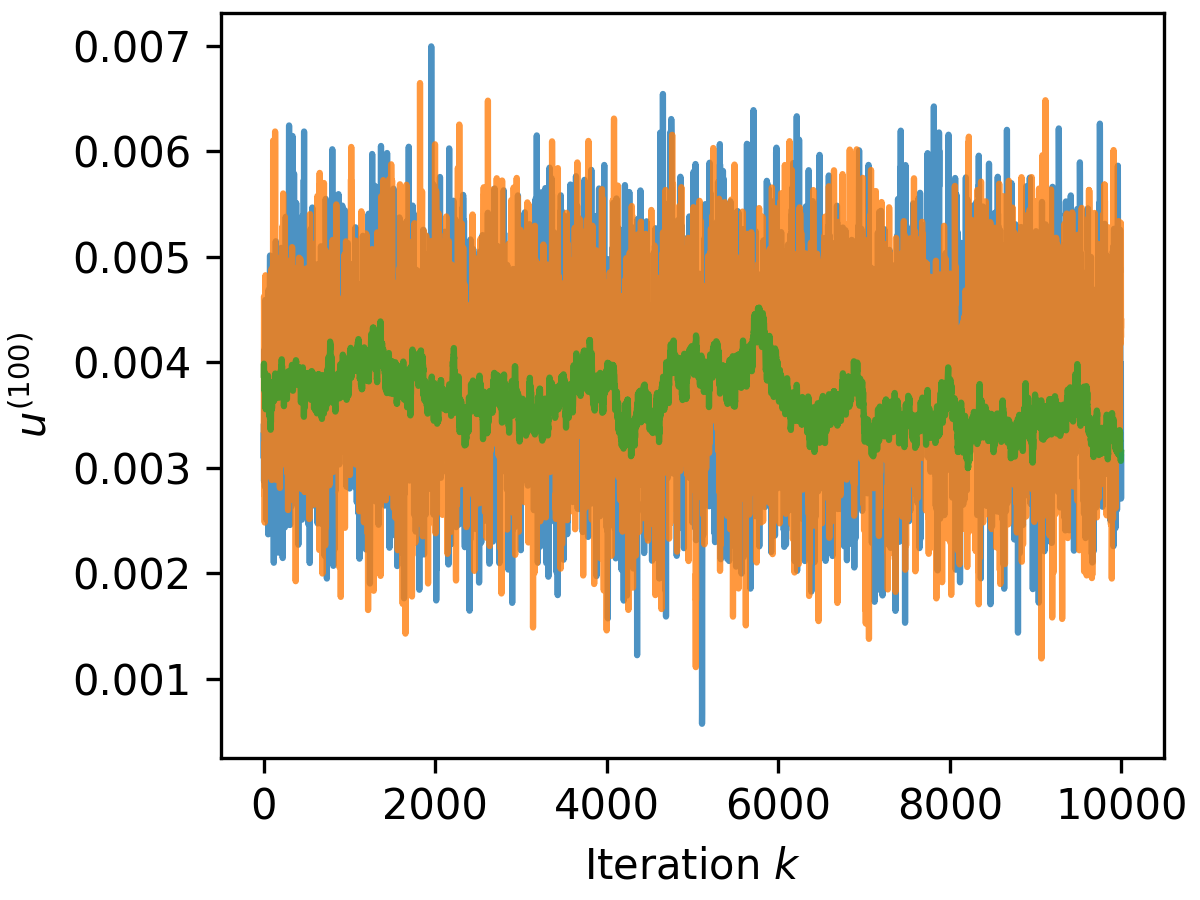}
    \caption{Trace plot.}
    \label{fig:prior-all-traceplot}
  \end{subfigure}
  \begin{subfigure}[t]{0.3\textwidth}
    \includegraphics[width=\textwidth]{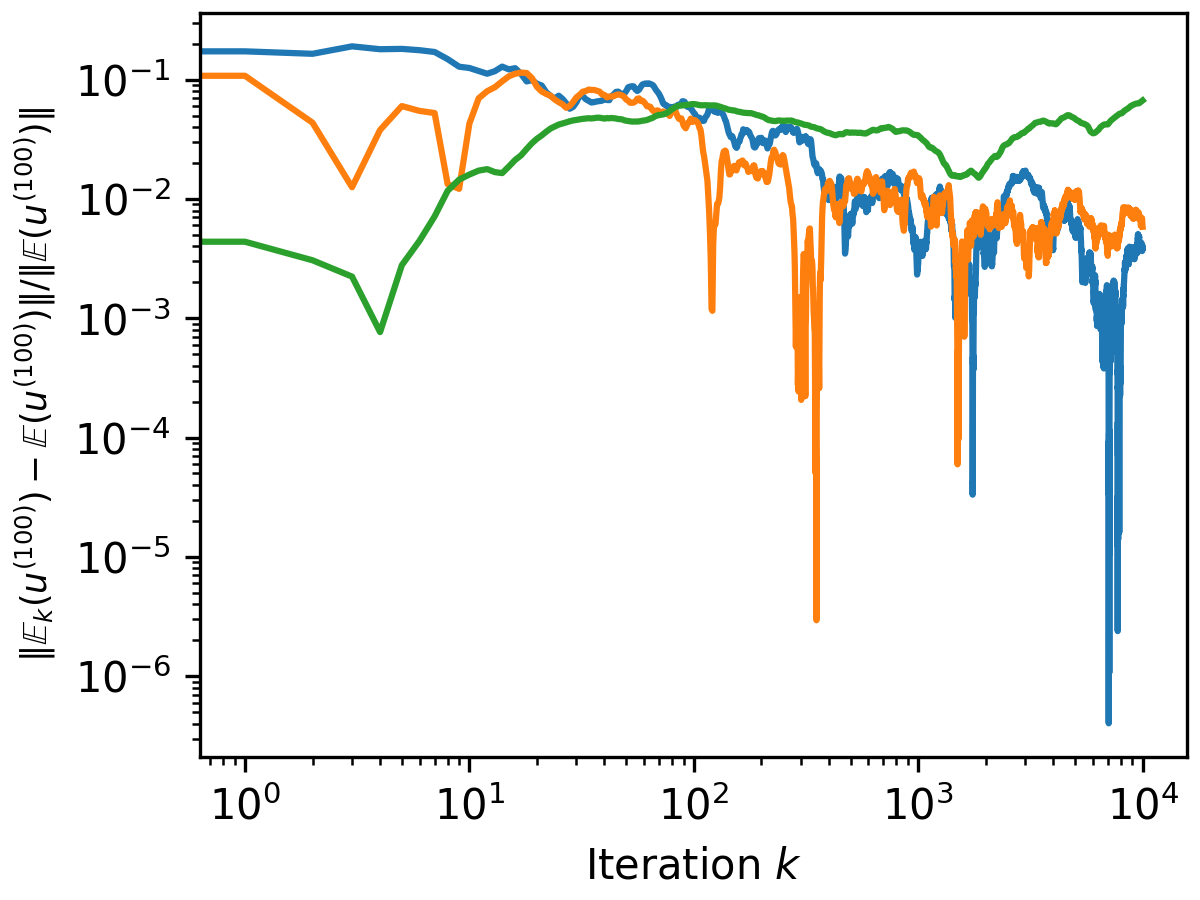}
    \caption{Relative error.}
    \label{fig:prior-dof-convergence}
  \end{subfigure}
  \begin{subfigure}[t]{0.3\textwidth}
    \includegraphics[width=\textwidth]{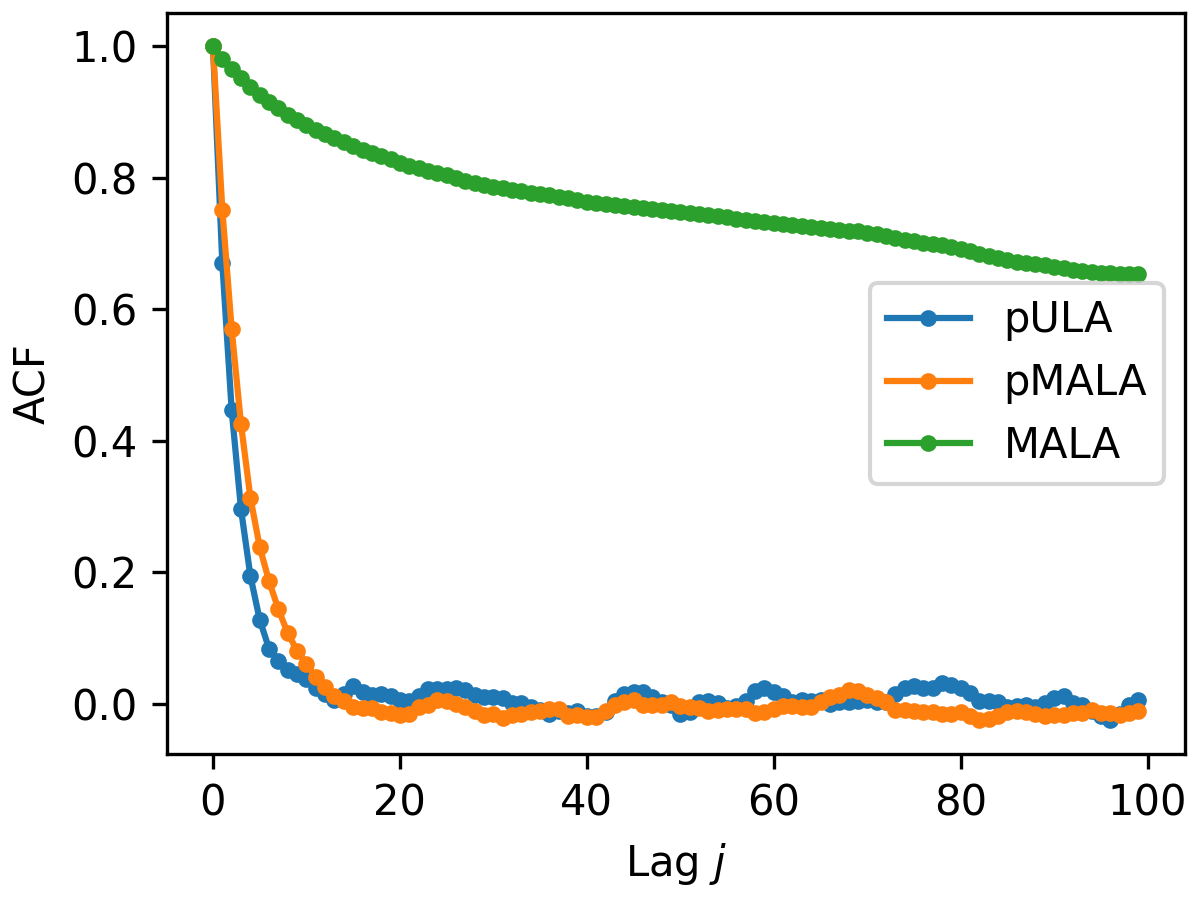}
    \caption{ACF plot.}
    \label{fig:prior-acf}
  \end{subfigure}
  \caption{
    Prior: results for state dimension $d = 16,649$. Shown are the results for
    the finite element coefficient $u^{(100)}$, using the post-warmup samples.
    Preconditioning is essential for this problem, to avoid poor mixing.
  }
  \label{fig:prior}
\end{figure}

Trace plots for the single FEM coefficient $u^{(100)}$ are shown in
Figure~\ref{fig:prior-all-traceplot} (other FEM coefficients give visually
similar plots), and demonstrate appropriate mixing. In this case the
exact-Hessian and the
mean-$\theta$-Hessian appear to give equivalent results (c.f. Table~\ref{tab:mean-var-errors}). The MALA sampler
fails to explore the space due to the poor conditioning
of $A_\theta^{-1} G A_\theta^{-\top}$. Figure~\ref{fig:prior-dof-convergence}
shows $\left| \bE_k(u^{(100)}) - \bE(u^{(100)})\right| / |\bE(u^{(100)})|$,
where $\bE_k(u^{(100)}) = \frac{1}{k} \sum_{i = 1}^k u_i^{(100)}$. The reference
$\bE(u^{(100)})$ is computed from $10,000$ exact samples from
$p(u)$. The convergence rates of the pULA and pMALA are visually equivalent, and the
MALA sampler records no decrease in the error. The autocorrelation functions
(ACFs), for the FEM coefficient $u^{(100)}$, are shown in
Figure~\ref{fig:prior-acf}, and accord with the previous results. The pULA and pMALA
samplers give similar performance, and the MALA samples appear highly correlated.

Next we check the accuracy of the mean and the variance estimates with the
relative errors of each, $\lVert \bE_K(u) - \bE(u) \rVert / \lVert \bE(u)
\rVert$, $\lVert \mathrm{var}_K(u) - \var(u)\rVert / \lVert
\var(u) \rVert$, computed against estimates from the exact reference
samples. These are displayed in Table~\ref{tab:mean-var-errors}, and show that
all samplers are accurate to order $O(10^{-2})$ in the mean. In the
variance, the pULA sampler is more accurate than the pMALA, and the MALA gives a
poor estimate. Note also that this provides no indication that the effects of
the ULA bias are practically realised for this example, with pULA having smaller
relative errors than the unbiased samples from pMALA. In terms of sampler efficiency, as
measured by the effective sample size per second (computed from the samples of
the FEM coefficient $u^{(100)}$), preconditioning results in one
and two orders of magnitude improvement over no preconditioning, for the pMALA
and pULA samplers, respectively.

\begin{table}
  \centering
  \begin{tabular}{lrrr}
\hline
 Sampler   &   $\mathsf{Error}(\mathbb{E}(u))$ &   $\mathsf{Error}(\mathrm{var}(u))$ &   ESS / s \\
\hline
 pULA      &                          0.004505 &                            0.038913 &     4.605 \\
 pMALA     &                          0.006356 &                            0.060541 &     0.745 \\
 MALA      &                          0.049279 &                            0.994318 &     0.019 \\
\hline
\end{tabular}
  \caption{Prior: relative norm errors of the mean and variance for each
    sampling method (errors are computed against the exact samples), for the $128
    \times 128$ mesh. The ESS is computed from samples of the FEM coefficient
    $u^{(100)}$.}
  \label{tab:mean-var-errors}
\end{table}

\subsection{Sampling the posterior}

We now detail the results for sampling the posterior $p(u \given y)$, using the
prior of the previous section $p(u \given \theta) = \mathcal{N}(A_\theta^{-1}b,
A_\theta^{-1} G A_\theta^{-\top})$. In this experiment we assume $n_\text{obs}$
measurements are obtained at $d_y$ locations. Denoting by $y_i \in \bR^{d_y}$,
gives the full dataset as $y = [y_1, \ldots, y_{\nobs}] \in \bR^{d_y \times \nobs}$.
The observation process is $y_i = H u + e$, $e \sim \NPDF(0, R)$, and
the likelihood is $p(y \given u) = \prod_{i = 1}^{\nobs} p(y_i \given u)$,
for $p(y_i \given u) = \NPDF(Hu, R)$.

Each observation vector is generated from an exact sample from the marginal
$u^i \sim p(u)$, interpolated at the observation locations. To induce some sort
of model mismatch, we scale this sample by a known scale factor, and then add on
noise $e^i \sim p(e) = \NPDF(0, R)$. We repeat this $\nobs$ times to give the
full dataset $y = [y_1, \ldots, y_{\nobs}]$.

The conditional posterior $p(u \given y, \theta)$ can be written out as
\begin{gather*}
  p(u \given \theta, y) = \mathcal{N}(m_{u, \theta}, C_{u, \theta}), \\
  m_{u, \theta} = C_{u, \theta} \left(
    A_\theta^\top G A_\theta u
    + \sum_{i = 1}^{n_\text{obs}} H^\top R^{-1} y_i
  \right), \\
C_{u, \theta}^{-1} = A_\theta^\top G A_\theta + n_\text{obs} H^\top R^{-1} H.
\end{gather*}
We compute the marginal posterior
$p(u \given y) = \int p(u\given y, \theta ) p(\theta) \, \mathrm{d} \theta$ with
MALA, pMALA, ULA, pULA, and pCN. Based on results in the previous section, the
preconditioned Langevin samplers (pULA and pMALA), employ
the mean-$\theta$-Hessian preconditioner, with
$M^{-1} = A_{\bar \theta}^\top G^{-1} A_{\bar \theta} + H^\top R^{-1} H$. The
proposal covariance matrix for pCN is the prior covariance. Full details are
contained in the appendix.

In this section, the state dimension is the same as previous, with
$d = 16,649$. We generate $\nobs = 100$ measurements of vectors of size
$d_y = 128$ to give the data $y$. For each of the Metropolis-based methods, we
generate $10,000$ samples as warmup for each sampler, and $K = 10,000$
post-warmup samples. For the ULA samplers, we generate $K = 10,000$ samples
using $10$ inner iterations for each sample. All samplers are initialised to a
sample from the exact posterior, $u_0 \sim p(u \given y)$. During the warmup
iterations, the adjusted MCMC algorithms MALA, pMALA, and pCN run an
adaptive stepsize algorithm which automatically tunes the stepsize to give
acceptance rates of approximately $50\%$. For the ULA sampler, we used
$\eta = 1 \times 10^{-9}$, and for the pULA sampler we used the stepsize
$\eta = d^{-1/3} \approx 3.9 \times 10^{-2}$.

\begin{figure}[t]
  \centering
  \begin{subfigure}[b]{0.4\textwidth}
    \includegraphics[width=\textwidth]{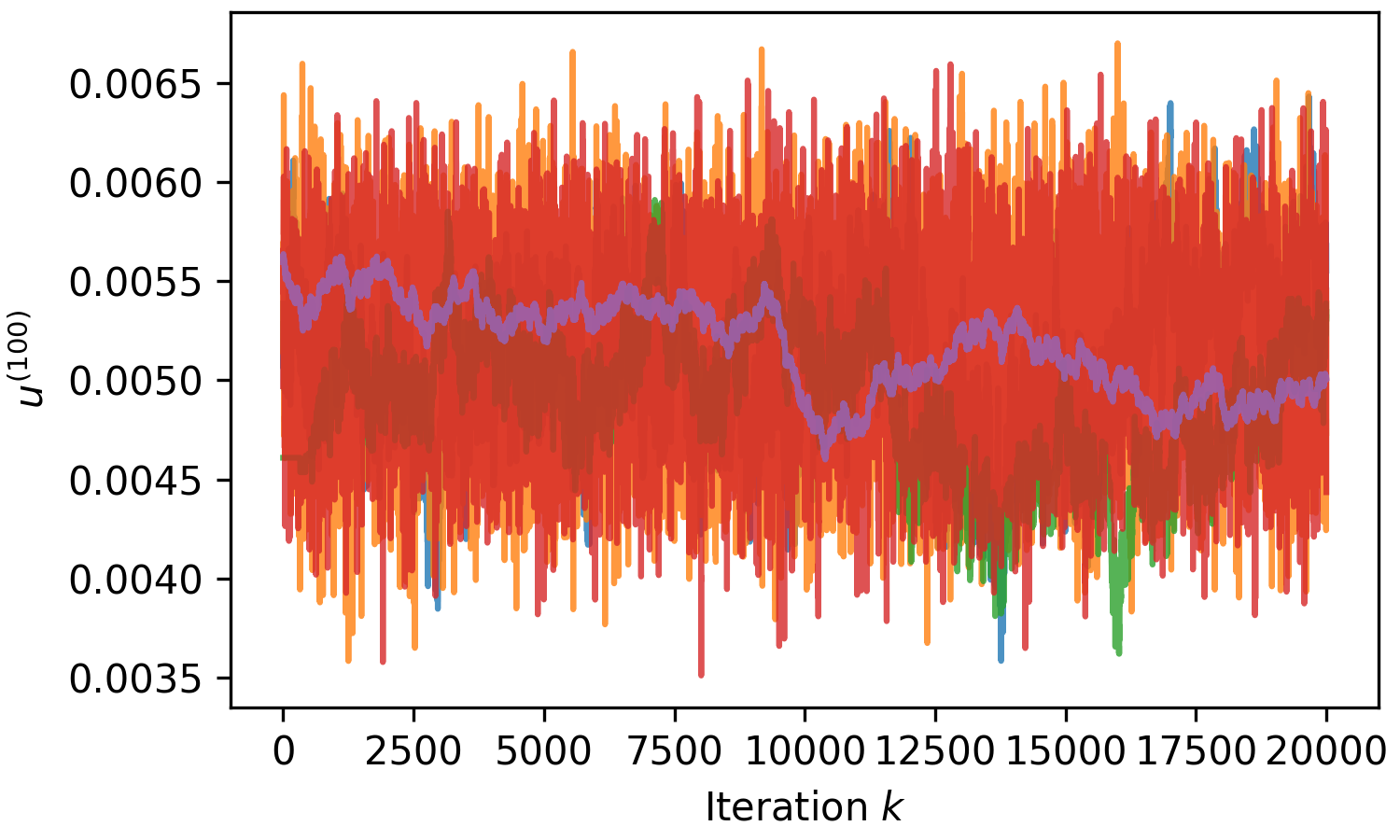}
    \caption{Trace plot.}
    \label{fig:post-traceplot}
  \end{subfigure}
  ~
  \begin{subfigure}[b]{0.32\textwidth}
    \includegraphics[width=\textwidth]{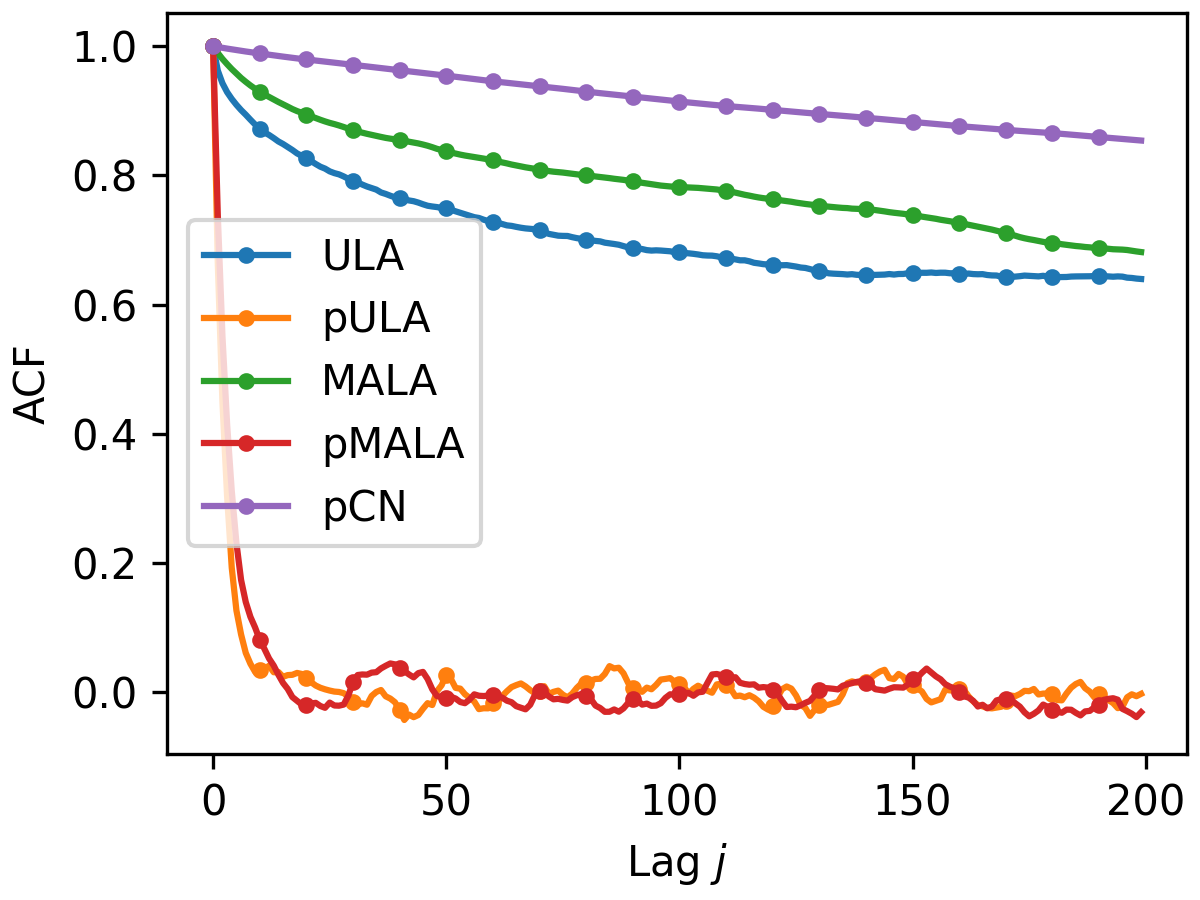}
    \caption{ACF plot.}
    \label{fig:post-acf}
  \end{subfigure}
  \caption{Linear posterior: results for state dimension $d = 16,649$; trace plots and
    ACF plots are shown for the FEM coefficient $u^{(100)}$. Compared to the
    standard pCN, the Langevin algorithms converge rapidly, with better
    UQ.}
\end{figure}

Trace plots for $u^{(100)}$ are shown in Figure~\ref{fig:post-traceplot} and
demonstrate that all samplers converge to the target measure, with the pCN
sampler giving notably poor mixing. This is also seen in the ACF
(Figure~\ref{fig:post-acf}), where pCN exhibits the highest correlation
correlated samples. For the non-preconditioned Langevin algorithms, adding in
the MH correction results in increased correlations. For the preconditioned
Langevin samplers, performance is comparable.

\begin{table}[t]
  \centering
  \begin{tabular}{lrrr}
\hline
 Sampler   &   $\mathsf{Error}(\mathbb{E}(u))$ &   $\mathsf{Error}(\mathrm{var}(u))$ &   ESS / s \\
\hline
 ULA       &                          0.005508 &                            0.546592 &     0.022 \\
 pULA      &                          0.000234 &                            0.025406 &     3.204 \\
 MALA      &                          0.008382 &                            0.785986 &     0.029 \\
 pMALA     &                          0.000257 &                            0.029731 &      1.52 \\
 pCN       &                           0.00814 &                            0.817542 &     0.012 \\
\hline
\end{tabular}
  \caption{Linear posterior: relative norm errors of the mean and variance, for each sampling
    method, and the ESS per second, for the post-warmup samples. The errors are
    computed against $10^4$ exact samples. The ESS/s is computed from the
    FEM coefficient $u^{(100)}$.}
  \label{tab:mean-var-errors-post}
\end{table}

Table~\ref{tab:mean-var-errors-post} shows the empirical results. The pCN, ULA,
and MALA perform similarly to one another: each is accurate in the mean, but is
inaccurate for the variance. Poor conditioning again makes the
non-preconditioned Langevin samplers mix poorly, and the pCN
struggles due to the prior covariance not capturing the posterior covariance
structure. Preconditioned Langevin samplers are more performant, with both pMALA
and pULA giving accurate estimates of the mean and variance, with two orders of
magnitude improvement in the sampler efficiencies over pCN. The pULA sampler is
approximately twice as efficient to the pMALA sampler. This is thought to be due
to pULA taking larger stepsizes, for which the increase in bias is not realised
in Table~\ref{tab:mean-var-errors-post} (errors on the variance are similar for
smaller for pULA vs. pMALA).

\subsection{Sampling the nonlinear posterior}

Finally, we detail the results for sampling the nonlinear posterior $p(u \given
y)$. In this case the data is generated from a nonlinear mapping of the FEM
coefficients, so that $y_i = \cH(u) + e$, $e \sim \NPDF(0, R)$,
for $i = 1, \ldots, \nobs$.
In this case, $\cH : \bR^d \to \bR^{n_y}$, and we assume that
$\cH(\cdot)$ is at least once differentiable. Data is simulated using the same
procedure of the previous subsection, except that where solutions are
interpolated with the observation operator $H$, we now use the nonlinear
$\cH(\cdot)$. In this case, we use a sigmoid function, defining
\[
  \cH(u)_j = S((Hu)_j), \; \text{ with } S(x) = \frac{0.1}{1 + e^{-100(x - 0.05)}},
\]
which has the interpretation that past input values of $\approx 0.1$, the
synthetic observation sensor saturates and is unable to distinguish between
input values.

The prior of the previous subsection is used,
$p(u \given \theta) = \NPDF(A_\theta^{-1} b, A_\theta^{-1} G A_\theta^{-\top})$,
so, up to an additive constant, the potential is
\[
  \Phi_\theta^y(u)
  = \frac{1}{2} \sum_{i = 1}^{n_{\text{obs}}} \left( y_i - \cH(u) \right)^{\top} R^{-1} \left( y_i - \cH(u) \right)
  + \frac{1}{2} (A_\theta u - b)^\top G^{-1} (A_\theta u - b).
\]
In this case no exact sampler is available so we use the same samplers of the
previous section (ULA, MALA, pULA, pMALA, and pCN). For the preconditioned
Langevin methods, the preconditioner $M$ uses the inverse Gauss-Newton
Hessian~\cite{chen2011hessian} defined at the maximum-a-posteriori (MAP)~\cite{murphy2012machine} estimate
$u_* = \argmin_{u \in \bR^d} \Phi_{\bar\theta}^y(u)$. Letting $\cJ(u) = \nabla_u \cH(u)$, this gives
$M_\theta^{-1} = A_\theta^\top G^{-1} A_\theta + \cJ(u_*)^\top R^{-1} \cJ(u_*)$.
In practice we also set $\theta = \bar{\theta}$, so that $M^{-1} = M_{\bar\theta}^{-1}$.

The experimental setup uses $\nobs = 100$ observations at $d_y = 128$ locations
across the mesh, so $y \in \bR^{d_y \times \nobs}$. Samplers are each
initialised to the MAP estimate, $u_0 = u_*$, and are run for $2 \times 10^4$
warmup iterations and $K = 2 \times 10^4$ sampling iterations. Stepsizes are the
same as previous. All samplers set $\ninner = 10$, apart from ULA, for which we
set $\ninner = 50$ (this was found to improve sampler performance with a mild
increase in runtime). The proposal covariance for pCN is again set to the prior
covariance matrix.

\begin{figure}[t]
  \centering
  \begin{subfigure}[c]{0.65\textwidth}
    \includegraphics[width=\textwidth]{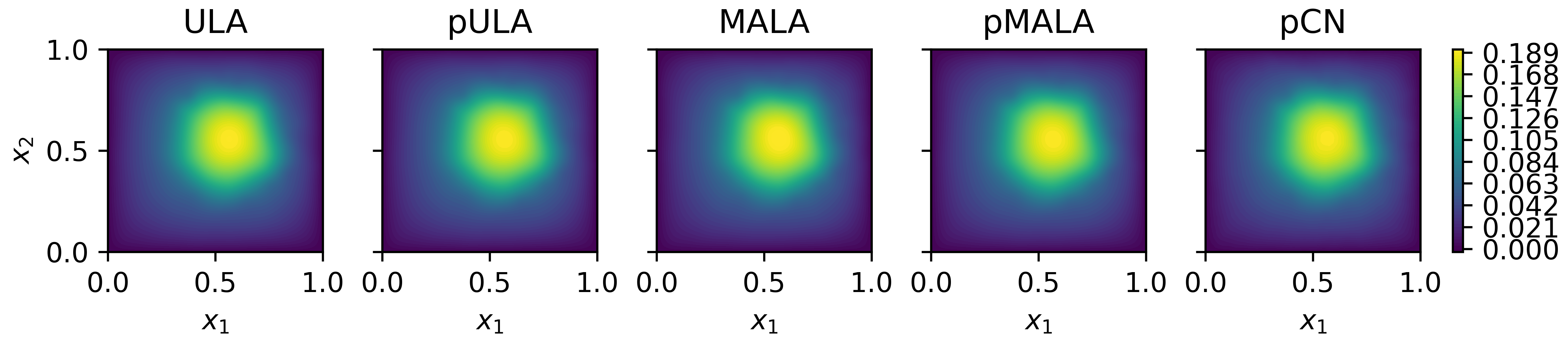}
    \includegraphics[width=\textwidth]{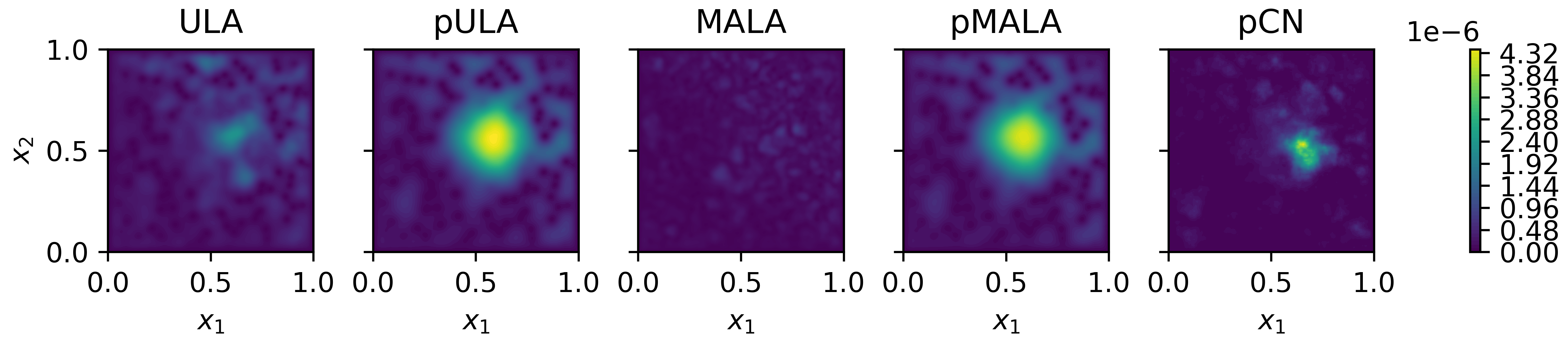}
    \caption{Posterior means ($\bE(u)$, top), and variance fields ($\var(u)$,
      bottom).}
    \label{fig:nll-means-vars}
  \end{subfigure}
  \begin{subfigure}[c]{0.3\textwidth}
    \includegraphics[width=\textwidth]{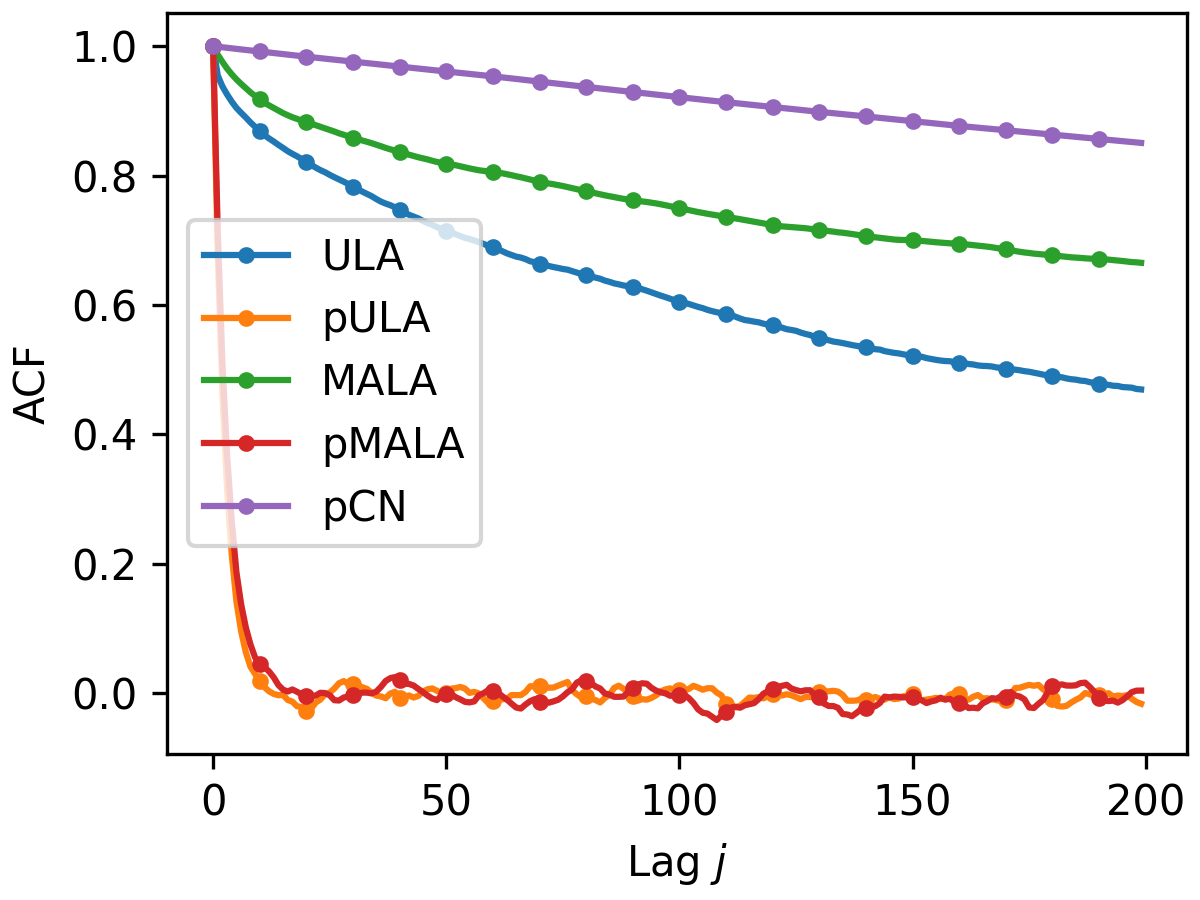}
    \caption{ACF plot.}
    \label{fig:nll-acfs}
  \end{subfigure}
  \caption{Nonlinear likelihood: posterior results. The ACF
    plot is shown for the samples from the FEM coefficient $u^{(100)}$. All
    samplers are accurate in the mean, but preconditioning captures the variance.}
  \label{fig:nll-results}
\end{figure}

Results are shown in Figure~\ref{fig:nll-results}. All posterior means show
agreement (Figure~\ref{fig:nll-means-vars}), as do the variance fields for pULA
and pMALA --- the other samplers struggle to provide accurate UQ (c.f.
Table~\ref{tab:nll-mean-var-errors}). The ACF,
which is shown in Figure~\ref{fig:nll-acfs}, suggests that correlations between
samples are similar to the linear posterior (c.f. Figure~\ref{fig:post-acf}),
with similar stratification of samplers. The Gauss-Newton Hessian, in this case,
is able to provide a sensible preconditioner that increases mixing, giving
a similar effect to that of the exact Hessian in the linear case.
Table~\ref{tab:nll-mean-var-errors} quantifies sampler performance, and gives
mostly similar results to the linear case. In this case MALA performs noticeably
worse due to the poorer conditioning of the nonlinear problem. ULA is able to
counteract this by taking larger stepsizes, whereas MALA remains constrained to
ensure optimal acceptance rates.

\begin{table}
  \centering
  \begin{tabular}{lrrr}
\hline
 Sampler   &   $\mathsf{Error}(\mathbb{E}(u))$ &   $\mathsf{Error}(\mathrm{var}(u))$ &   ESS / s \\
\hline
 ULA       &                          0.004063 &                            0.615644 &    0.0228 \\
 pULA      &                          0.000416 &                            0.050948 &    3.2322 \\
 MALA      &                          0.009769 &                            0.898006 &     0.002 \\
 pMALA     &                          0.000201 &                            0.014455 &    1.3233 \\
 pCN       &                          0.007586 &                            0.760288 &    0.0091 \\
\hline
\end{tabular}
  \caption{Nonlinear likelihood: relative norm errors of the mean and variance for each
    sampling method (errors are computed against $2 \times 10^5$ post-warmup
    pMALA samples), and the ESS is computed from the FEM coefficient $u^{(100)}$.}
  \label{tab:nll-mean-var-errors}
\end{table}

\section{Conclusions}
We have provided an iterative sampling approach for constructing finite element solutions to stochastic PDEs. Our approach is in the same vein as iterative FEM solvers and targets scenarios where a direct sampling approach might not be feasible. Furthermore, leveraging the standard tools of Bayesian inference, we have provided a principled way to incorporate data the statistical finite element model. This construction blends the data assimilation and the model construction, paving the way of customising these constructions for specific applications. The Langevin-based inference schemes we provide are generic as long as the gradients of the log-posteriors are available. Our future work plans include extending this framework to nonlinear PDEs using the results available for ULA for non-convex potentials \cite{zhang2019nonasymptotic, akyildiz2020nonasymptotic, lovas2020taming} and use well-studied probabilistic tools within this framework, e.g., model selection.
\label{sec:conclusion}

\section*{Acknowledgements}

We thank Valentin De Bortoli for useful discussions.

\bibliographystyle{siamplain}
\bibliography{draft}

\appendix

\begin{center}
  \section*{Appendices}
\end{center}

\small

\section{Lemmata}
\begin{lemma}\label{lem:ChainRuleKL} (The Chain Rule of the KL divergence) Let $p(x_1,\ldots,x_d)$ and $q(x_1,\ldots,x_d)$ be two arbitrary probability distributions on $\bR^d$. For $j \in \{1,\ldots, d\}$, we denote the marginals of $p$ and $q$ on the first $j$ coordinates by $p^{(j)}$ and $q^{(j)}$, i.e.,
\begin{align*}
  p^{(j)} = \int \cdots \int p(x_1, \ldots, x_d) \, \md x_{j+1} \ldots \md x_d,
\end{align*}
and
\begin{align*}
  q^{(j)} = \int \cdots \int q(x_1, \ldots, x_d) \, \md x_{j+1} \ldots x_d.
\end{align*}
Then
\begin{align*}
    \KL(p^{(1)} || q^{(1)}) \leq \KL(p^{(2)} || q^{(2)}) \leq \cdots \leq \KL(p || q),
\end{align*}
i.e., the function $j \mapsto \KL(p^{(j)} || q^{(j)})$ is increasing.
\end{lemma}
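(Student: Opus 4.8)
The plan is to reduce the full chain of inequalities to a single inductive step: it suffices to show that $\KL(p^{(j)} || q^{(j)}) \leq \KL(p^{(j+1)} || q^{(j+1)})$ for each $j \in \{1, \ldots, d-1\}$, since chaining these adjacent comparisons yields the claimed monotonicity. We may assume $p^{(j+1)} \ll q^{(j+1)}$, as otherwise the right-hand side is $+\infty$ and the inequality is trivial; note that absolute continuity of the full marginal descends to the lower marginal, so $p^{(j)} \ll q^{(j)}$ as well. First I would factor each $(j+1)$-marginal into the $j$-marginal times a conditional density, writing $p^{(j+1)}(x_{1:j+1}) = p^{(j)}(x_{1:j}) \, p(x_{j+1} \mid x_{1:j})$ and likewise for $q$, where $x_{1:j}$ abbreviates $(x_1, \ldots, x_j)$.

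Next I would substitute these factorisations into the definition of $\KL(p^{(j+1)} || q^{(j+1)})$ and split the logarithm of the density ratio into two additive pieces. The piece $\log\big(p^{(j)}(x_{1:j}) / q^{(j)}(x_{1:j})\big)$ integrates, after marginalising out $x_{j+1}$ by Fubini, to exactly $\KL(p^{(j)} || q^{(j)})$, while the piece $\log\big(p(x_{j+1} \mid x_{1:j}) / q(x_{j+1} \mid x_{1:j})\big)$ integrates to the expected conditional divergence. This yields the exact chain-rule decomposition
\[
  \KL(p^{(j+1)} || q^{(j+1)}) = \KL(p^{(j)} || q^{(j)}) + \bE_{x_{1:j} \sim p^{(j)}}\big[ \KL(p(\cdot \mid x_{1:j}) || q(\cdot \mid x_{1:j})) \big].
\]

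The result then follows immediately by discarding the second term, which is nonnegative since it is an expectation of KL divergences, each of which is nonnegative by Gibbs' inequality (equivalently, by Jensen applied to the convex function $t \mapsto -\log t$). Iterating over $j$ gives the full chain $\KL(p^{(1)} || q^{(1)}) \leq \cdots \leq \KL(p || q)$. I expect the only delicate point to be the measure-theoretic bookkeeping around the conditional densities: one must ensure that the conditionals $p(\cdot \mid x_{1:j})$ and $q(\cdot \mid x_{1:j})$ are well-defined and the ratios meaningful $p^{(j)}$-almost everywhere, which is precisely what the reduction to the absolutely continuous case guarantees. With that in hand, the remaining computation is a routine application of Fubini to separate the two integrals, so the substance of the argument is the decomposition together with the nonnegativity of relative entropy.
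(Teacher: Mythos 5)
The paper states this lemma in the appendix without any proof, so there is no in-paper argument to compare against; your proposal supplies the missing argument and it is correct. The reduction to a single adjacent step, the factorisation $p^{(j+1)} = p^{(j)}\,p(x_{j+1}\mid x_{1:j})$, the additive split of the log-ratio, and the identification of the second term as an expected conditional divergence constitute the standard chain-rule proof, and discarding that nonnegative term gives exactly the claimed monotonicity. Two small points are worth making explicit if you write this up in full: first, the interchange $\int f + \int g = \int(f+g)$ is legitimate here not because of absolute integrability but because both pieces are themselves KL divergences and hence well-defined in $[0,\infty]$, so no $\infty-\infty$ ambiguity can arise; second, the same conclusion follows in one line from the data-processing inequality applied to the coordinate projection $(x_1,\ldots,x_{j+1}) \mapsto (x_1,\ldots,x_j)$, which is the route a reader citing a standard reference would likely take --- your decomposition is essentially the proof of that inequality for deterministic maps, so the two are equivalent in substance, with yours having the advantage of exhibiting the exact gap $\bE_{p^{(j)}}\bigl[\KL\bigl(p(\cdot\mid x_{1:j})\,\|\,q(\cdot\mid x_{1:j})\bigr)\bigr]$ between consecutive terms.
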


\begin{lemma}\label{lem:conditionalW2} Let $\mu_\theta$ and $\nu_\theta$ be two conditional measures. Let $\mu$ and $\nu$ be their marginals, i.e.
\begin{align*}
    \mu = \int \mu_\theta p(\md \theta) \quad \quad \textnormal{and} \quad \quad \nu = \int \nu_\theta p(\md\theta).
\end{align*}
Then, we have
\begin{align}
    W_2^2(\mu, \nu) \leq \bE_\theta \left[ W_2^2(\mu_\theta, \nu_\theta) \right].
\end{align}
\end{lemma}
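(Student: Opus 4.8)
The plan is to prove the inequality by exhibiting an explicit (generally sub-optimal) coupling of the marginals $\mu$ and $\nu$ built from optimal couplings of the conditionals, and then using that the squared Wasserstein-2 distance is the infimum of the transport cost over all couplings. First I would, for each fixed $\theta$, invoke the existence of an optimal coupling $\gamma_\theta \in \Pi(\mu_\theta, \nu_\theta)$ attaining
\[
  W_2^2(\mu_\theta, \nu_\theta) = \int_{\bR^d \times \bR^d} \|x - y\|^2 \, \gamma_\theta(\md x, \md y),
\]
which is guaranteed whenever the conditionals have finite second moments; this holds here, since all measures in question are Gaussian (or strongly log-concave) with finite variance.

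Next I would define the mixture measure
\[
  \gamma(\md x, \md y) = \int \gamma_\theta(\md x, \md y)\, p(\md \theta)
\]
on $\bR^d \times \bR^d$ and verify that $\gamma \in \Pi(\mu, \nu)$, i.e. that its first and second marginals are $\mu$ and $\nu$ respectively. This is immediate from Fubini's theorem: for any bounded measurable $f$ one has $\int f(x)\, \gamma(\md x, \md y) = \int \big( \int f(x)\, \mu_\theta(\md x)\big)\, p(\md\theta) = \int f \, \md\mu$, and symmetrically for the second marginal. Since $W_2^2(\mu,\nu)$ is the infimum of the transport cost over all couplings, $\gamma$ furnishes an upper bound:
\[
  W_2^2(\mu,\nu) \le \int \|x-y\|^2\, \gamma(\md x, \md y)
  = \int \Big( \int \|x-y\|^2\, \gamma_\theta(\md x,\md y) \Big) p(\md\theta)
  = \int W_2^2(\mu_\theta,\nu_\theta)\, p(\md\theta),
\]
which is precisely $\bE_\theta[W_2^2(\mu_\theta,\nu_\theta)]$, as claimed.

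The main obstacle I anticipate is \emph{measurability}: to form the mixture $\gamma = \int \gamma_\theta\, p(\md\theta)$ one needs the map $\theta \mapsto \gamma_\theta$ to be measurable, so that $\theta \mapsto \gamma_\theta(B)$ is measurable for every Borel set $B$ and the integral against $p(\md\theta)$ is well-defined. This requires a measurable-selection argument for the optimal couplings, which follows from standard results (e.g. the Kuratowski--Ryll-Nardzewski selection theorem) via lower semicontinuity of the transport cost and weak compactness of $\Pi(\mu_\theta,\nu_\theta)$.

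In the concrete setting of this paper this subtlety can be sidestepped entirely: the measures $\mu_\theta$ and $\nu_\theta$ are Gaussian, and the optimal coupling between two Gaussians is given in closed form (through the Bures metric on covariances), depending continuously, hence measurably, on their means and covariances, which are themselves explicit measurable functions of $\theta$. Thus the measurable selection is automatic, and the bound reduces to the transport-cost estimate above.
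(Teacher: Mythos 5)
Your proposal is correct and follows essentially the same route as the paper: both proofs select an optimal coupling $\gamma_\theta^\star$ of the conditionals for each $\theta$, mix them against $p(\md\theta)$ to obtain a (generally suboptimal) coupling of $\mu$ and $\nu$, and conclude by the infimum characterisation of $W_2^2$. The measurability concern you raise is exactly the point the paper handles by citing Villani's Corollary~5.22, so your additional discussion of measurable selection is a more explicit treatment of the same step rather than a different argument.
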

\begin{proof}
Note that
\begin{align*}
    \bE_\theta \left[ W_2^2(\mu_\theta, \nu_\theta) \right]] &= \int \inf_{\gamma_\theta \in \Gamma(\mu_\theta, \nu_\theta)} \int \|u - u'\| \gamma_\theta(\md u, \md u') p(\md \theta).
\end{align*}
Now we take $\gamma^\star_\theta \in \Gamma(\mu_\theta, \nu_\theta)$ that attains the infimum and write
\begin{align*}
    \bE_\theta \left[ W_2^2(\mu_\theta, \nu_\theta) \right]] &= \int \int \|u - u'\| \gamma^\star_\theta(\md u, \md u') p(\md \theta), \\
    &= \int \|u - u'\| \gamma^\star(\md u, \md u'), \\
    &\geq W_2^2(\mu, \nu),
\end{align*}
where the second line follows from measurability of $\gamma_\theta^\star$ (see \cite[Corollary ~5.22]{villani2009optimal}).
\end{proof}

\section{Gaussian process on structured grids}

To speed up sampling $\theta(x)$ we make use of the regular mesh
\cite{saatci2011scalable}. In this work we use
two-dimensional regular meshes for all simulations. The log-Gaussian process
$\theta(x)$ can make use of this regular structure due to separability of
$k_\mathrm{se}(x, x')$ across spatial dimensions. We can write
the $n_m^2$ nodes of the mesh $\Omega_h$ as a Cartesian product between two sets
$\mathcal{X}_1 \times \mathcal{X}_2$, denoting the nodal locations as
$\mathcal{X}_1 = (x_{1,1}, \ldots x_{1, n_m})$, $\mathcal{X}_2 = (x_{2, 1},
\ldots, x_{2, n_m})$. Then the GP covariance matrix $K$ can be
written as a Kronecker product, $K = K_1 \otimes K_2$, where $K_{1, ij} =
k_\mathrm{se}(x_{1,i}, x_{1,j})$, $K_{2, ij} = k_\mathrm{se}(x_{2,i}, x_{2,j})$.
The Cholesky decomposition $K = LL^\top$ can also be written as
$LL^\top = \left( L_1 \otimes L_2 \right) \left( L_1 \otimes L_2 \right)^\top$,
where $K_1 = L_1L_1^\top$, $K_2 = L_2L_2^\top$.

This means that the covariance matrix never needs to be stored in memory.
Instead, the Kronecker factors $K_1$ and $K_2$ are stored, which gives
reduction in memory (in terms of floating point numbers) from
$\mathcal{O}((n_m^2)^2)$ to $\mathcal{O}(n_m^2)$.
Also, we need only take the Cholesky on the $n_m$-dimensional nodal locations,
which requires $\mathcal{O}(n_m^3 / 3)$ work), as opposed to
the full $2D$ mesh, which would require $\mathcal{O}((n_m^2)^3 / 3)$ operations.
Finally, making using of the ``vec trick'', we can perform matrix-vector
multiplications with $L = \left( L_1 \otimes L_2 \right)$ in $\mathcal{O}(n_m^2)$
time as opposed to $\mathcal{O}((n_m^2)^2)$.

\section{Metropolis-adjusted Langevin algorithms}

In this section we detail the Metropolis-adjusted Langevin samplers that we use
in this paper. For more details see \cite{girolami2011riemann}. For a fixed
$\theta$, recall that the ULA proposal step (as in
\eqref{eq:ULAiteration}), is given by
\[
  u_* = u_k - \eta \nabla \Phi_\theta(u_k) + \sqrt{2\eta} Z_{k+1},
\]
which defines a proposal density for a Metropolis sampler,
$q(u_* \given u_k, \theta) = \mathcal{N}(u_k - \eta \nabla \Phi_\theta(u_k), 2 \eta I)$.
Preconditioning the proposal with a symmetric positive-definite matrix $M$ gives
the proposal density
$q(u_* \given u_k, \theta) = \mathcal{N}(u_k - \eta M \nabla \Phi_\theta(u_k), 2 \eta M)$.
The acceptance ratio for the prior is
\[
  \alpha(u_*; u_k, \theta)
  = \min \left\{1, \frac{p(u_* \given \theta)}{p(u_k \given \theta)}
    \cdot \frac{q(u_k \given u_*, \theta)}{q(u_* \given u_k, \theta)}
  \right\},
\]
and is similarly defined for the posterior. This is computed on the log-scale to
avoid underflow.

For the joint update of $(u, \theta)$, we use a similar method
to that proposed for ULA: for a sampled value $\theta_k \sim p(\theta)$, we run
a MALA chain for a specified number of inner iterations $n_{\text{inner}}$,
sampling $u_{i}^{k} \sim p(u \given \theta_k)$, for $i = 1, \ldots,
n_{\text{inner}}$. The joint sample is then taken to be
$(u_{n_{\text{inner}}}^{k}, \theta_k) \sim p(u, \theta)$.

\section{Preconditioned-Crank Nicolson}

We also compare the preconditioned-Crank Nicolson (pCN) sampler in the posterior
case. The pCN sampler builds upon the idea that proposals which are reversible
with respect to the \textit{prior measure} only require computing a likelihood
ratio for the acceptance probability. In the linear case we sample from a posterior
$p(u \given y, \theta)$ which can be written as
\[
  p(u \given y, \theta) \propto \exp \left(
    -\frac{1}{2}  (y - Hu)^\top R^{-1} (y - Hu)
    - \frac{1}{2} (u - A_\theta^{-1} b)^\top \left( A_\theta^{-1} G
      A_\theta^{-\top} \right)^{-1} (u - A_\theta^{-1}b)
  \right).
\]
Now, if we consider the pCN proposal, defined by
\[
  u_* = A_\theta^{-1}b
  + \sqrt{1 - \beta^2} (u_k - A_\theta^{-1} b)
  + \beta \tilde{Z}_k,
  \quad \tilde{Z}_k \sim \mathcal{N}(0, A_\theta^{-1} G A_\theta^{-\top}),
\]
then by some algebra it can be shown that the acceptance probability is
\[
  \alpha(u_*, u_k; \theta) = \min \left\{
    1, \exp\left(
      -\frac{1}{2}  (y - H u_*)^\top R^{-1} (y - H u_*)
      +\frac{1}{2}  (y - H u_k)^\top R^{-1} (y - H u_k)
    \right)
  \right\},
\]
i.e., a likelihood ratio. For the joint update of $(u, \theta)$, again, we use a
similar method to that proposed for ULA: for a sampled value $\theta_k \sim
p(\theta)$, we run a pCN chain $n_{\text{inner}}$ iterations, sampling
$u_{i}^{k} \sim p(u \given \theta_k)$, for $i = 1, \ldots, n_{\text{inner}}$.
The joint sample is then taken to be
$(u_{n_{\text{inner}}}^{k}, \theta_k) \sim p(u, \theta)$.

\section{Numerical details}

We compute all samples using a workstation with an AMD Ryzen 9 5950X, with 128
GB memory. For the prior case, the exact sampler uses a smoothed aggregation
Algebraic Multigrid (AMG) solver, accelerated with Conjugate Gradients, with $4$
levels of coarseness to solve the linear system. This is implemented using the
\texttt{Python} package \texttt{pyAMG}~\cite{OlSc2018}. For the posterior case,
the exact sampler uses the sparse Cholesky decomposition as implemented in
CHOLMOD~\cite{chen2008algorithm}. The nonlinear likelihood is implemented in
JAX~\cite{jax2018github}, and makes use of automatic differentiation when computing
$\nabla_u \Phi_\theta^y(u)$. We also just-in-time compile (JIT) both the
log-likelihood and the gradient.

For applying preconditioners, in both the prior and posterior case, again we use
the sparse Cholesky decomposition (in this case the Hessian is sparse). For the
mean-$\theta$-Hessian preconditioners this is computed once and reused for the
entire chain. For the exact-Hessian preconditioner we compute the symbolic
factorization before running the chain, as the sparsity pattern of the Hessian
does not change.

\end{document}


\maketitle

\section{A detailed example}

Here we include some equations and theorem-like environments to show
how these are labeled in a supplement and can be referenced from the
main text.
Consider the following equation:
\begin{equation}
  \label{eq:suppa}
  a^2 + b^2 = c^2.
\end{equation}
You can also reference equations such as \cref{eq:matrices,eq:bb} 
from the main article in this supplement.

\lipsum[100-101]

\begin{theorem}
  An example theorem.
\end{theorem}

\lipsum[102]
 
\begin{lemma}
  An example lemma.
\end{lemma}

\lipsum[103-105]

Here is an example citation: \cite{KoMa14}.

\section[Proof of Thm]{Proof of \cref{thm:bigthm}}
\label{sec:proof}

\lipsum[106-112]

\section{Additional experimental results}
\Cref{tab:foo} shows additional
supporting evidence. 

\begin{table}[htbp]
{\footnotesize
  \caption{Example table.}  \label{tab:foo}
\begin{center}
  \begin{tabular}{|c|c|c|} \hline
   Species & \bf Mean & \bf Std.~Dev. \\ \hline
    1 & 3.4 & 1.2 \\
    2 & 5.4 & 0.6 \\ \hline
  \end{tabular}
\end{center}
}
\end{table}

\bibliographystyle{siamplain}
\bibliography{references}